\newtheorem{theorem}{Theorem} 
\newtheorem{lemma}{Lemma}
\newtheorem{definition}{Definition}%
\newcommand{\removelatexerror}{\let\@latex@error\@gobble}
\def\BibTeX{{\rm B\kern-.05em{\sc i\kern-.025em b}\kern-.08em
    T\kern-.1667em\lower.7ex\hbox{E}\kern-.125emX}}
\begin{document}
\title{An Adaptive Hybrid Channel Reservation Medium Access Control Protocol for Differentiated QoS}
\author{Ze Liu, Bo Li, Mao Yang\textsuperscript{*}, Zhongjiang Yan, Xichan Liu\thanks{Ze Liu, Bo Li, Mao Yang, Zhongjiang Yan and Xichan Liu are with School of Electronic Information, the Northwestern Polytechnical University, Xian, China. Email: {yangmao}@mail.nwpu.edu.cn.}}



\markboth{}%
{How to Use the IEEEtran \LaTeX \ Templates}

\maketitle
\begin{abstract}
In a densely deployed distributed wireless network, there may be various types of traffic with differentiated Quality of Service (QoS) requirements. However, when the network is heavily loaded, the collision increases significantly, making it difficult to guarantee the QoS of traffic. Designing an efficient Medium Access Control (MAC) protocol to guarantee the QoS of different types of traffic is an essential research direction. Channel reservation mechanism is a promising approach to improving QoS.
 However, few studies have focused on the channel reservation mechanism for differentiated traffic. It is difficult to take into account both the QoS of real-time traffic and the collision issue for ordinary traffic. To address this issue, this paper proposes the Differentiated Service Guarantee Adaptive Reservation Mechanism (DSGARM) protocol. A hybrid reservation mechanism is proposed by combining the absolute reservation mechanism and the relative reservation mechanism. The absolute reservation mechanism is adopted for real-time traffic. Meanwhile, the relative reservation mechanism is adopted for ordinary traffic. An adaptive algorithm is proposed to calculate the reservation parameters that meet the delay requirements based on the network conditions. The proposed work can be widely applied in the densely deployed distributed wireless network with differentiated QoS requirements. In addition, this paper establishes a mathematical model for the proposed mechanism and theoretically analyzes the performance.
Simulations verify that the mathematical model provides a good approximation of the protocol performance and demonstrates the advantages of the proposed protocol. 
\end{abstract}

\begin{IEEEkeywords}
Distributed wireless network, channel reservation mechanism, differentiated traffic guarantee.
\end{IEEEkeywords}

\section{Introduction}

\IEEEPARstart{W}{ith} the rapid development of the wireless communication technology, various densely deployed distributed wireless network scenarios have emerged. These scenarios, such as the Internet of Things, high-density Wireless Local Area Networks (WLAN), and mesh networks, are characterized by the deployment of a large number of devices in a limited area, leading to a high degree of interference and congestion \cite{barrachina2019komondor}. Real-time applications, such as voice-over-internet protocol, online games, and virtual reality (VR), pose significant challenges for wireless networks. The QoS of this traffic can be fully  guaranteed only when the delay satisfies specific conditions. In such scenarios, the transmission of traffic from multiple nodes entails distinct QoS requirements \cite{intro2}. For instance, traffic such as emergency messages needs to be transferred immediately with high reliability while other message such as surfing messages do not require such low delay requirement. Therefore, network protocols must provide differentiated QoS guarantee to accommodate different types of data \cite{intro3}.


IEEE 802.11ax utilizes the Enhanced Distributed Channel Access (EDCA) mechanism to differentiate different types of traffic and satisfy their respective QoS requirements. This mechanism employs varying contention window sizes for different types of traffic. However, in densely deployed distributed scenarios, using a smaller contention window for real-time traffic will lead to an increase in collision probability, delay time, packet loss rate, resulting in a poor QoS in the performance of real-time traffic. As highlighted in research findings \cite{intro4,2018Wireless}, EDCA is even less efficient than DCF, the predecessor of EDCA, in larger networks. 


One significant reason for the issues outlined above is the collision caused by disordered contention. To address this, channel reservation MAC schemes that aim to reduce disordered contention by reserving channel resources have been proposed \cite{intro8,intro9}. Channel reservation enable nodes to reserve future transmission periods based on their traffic status after obtaining a transmission opportunity. Then neighbor nodes keep silent in the reserved period to reduce collision. This mechanism provides a means to address the challenges posed by traditional random access methods, particularly under high network load conditions.  


The existing reservation mechanisms can be classified into absolute and relative reservation mechanisms. The absolute reservation mechanism can accurately indicate the reservation period, but there is a fragmentation issue that leads to resource waste. Relative reservation mechanism does not have this disadvantage, but cannot guarantee that the reservation period strictly meets the QoS requirements. Neither of these reservation mechanisms can satisfy the differentiated QoS requirements of various traffic. Especially, in densely deployed scenarios where multiple traffic with different QoS requirements coexist, the existing channel reservation mechanisms struggle to guarantee differentiated QoS requirements. Most reservation mechanisms only ensure the transmission performance of real-time traffic, while collisions for ordinary traffic remain unresolved. On the other hand, the reservation parameters for real-time traffic are typically specified directly and cannot guarantee QoS strictly.

To address the aforementioned problems, this paper proposes an adaptive hybrid channel reservation MAC protocol with differentiated QoS guarantees. The proposed protocol provides differentiated guarantees for different types of traffic by using corresponding reservation mechanisms and adaptive reservation parameter determination algorithms. The contributions of this paper can be summarized as follows:


(1) To the best of our knowledge, this is the first work to systematically combine the absolute reservation mechanism and the relative reservation mechanism to provide differentiated QoS guarantees for different types of traffic. The proposed protocol can reduce the collision of ordinary traffic while guaranteeing the delay requirement of real-time traffic and the performance superiority of the proposed protocol is verified by simulation.

(2) This paper establishes mathematical models and conducts theoretical analyses on the proposed hybrid mechanism through Markov models. The model derives the simulation result with different reservation mechanisms and parameters, as well as considering the effect of the non-ideal channel.

(3) The adaptive reservation parameter determination algorithm is proposed based on the established mathematical model, which provides differentiated QoS guarantees for different types of traffic based on their tolerable delay requirements, network conditions, and priority levels.

The remainder of this article is organized as follows: Section \ref{RELATED WORKS} provides an overview of existing research related to the topic. The system model is introduced In Section \ref{System Model} and the DSGARM protocol process is presented in Section \ref{PROPOSED DSGARM PROTOCOL PROCESS}. Section \ref{PERFORMANCE ANALYSIS} presents the theoretical analysis model. In Section \ref{SIMULATION RESULT}, the comparison between the simulation and theoretical results, as well as the proposed scheme and other reservation mechanisms are described detailedly. Some implementation issues are covered in Section \ref{IMPLEMENTATION ISSUES}. Finally, Section \ref{Conclusions} summarizes the work of this paper.

\section{Related Works}\label{RELATED WORKS}

\begin{figure}[!t]
\centering
\includegraphics[width=3in]{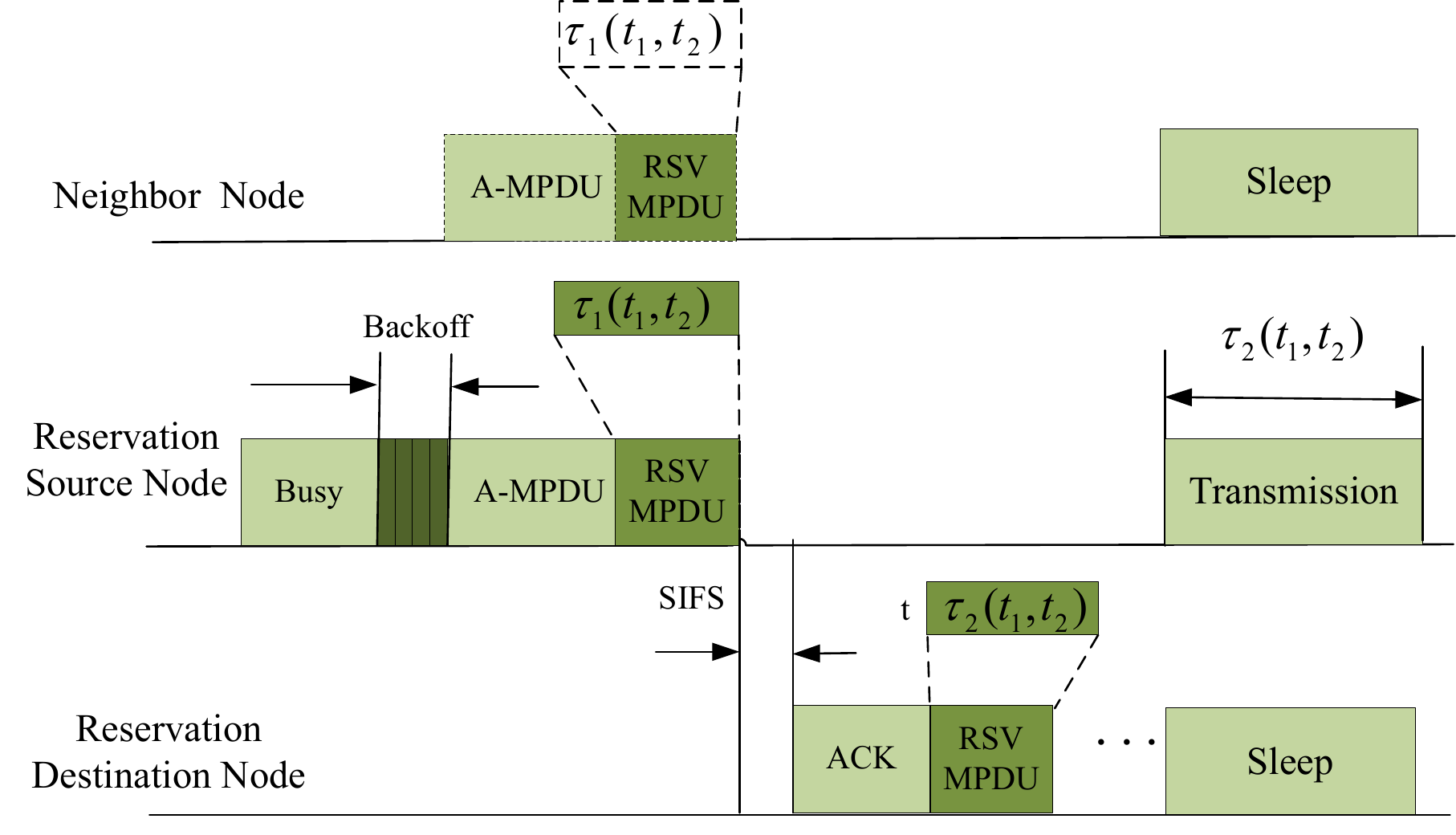}
    \caption{Basic channel reservation process}
	\label{Basic channel reservation process} 
\end{figure}

Fig. \ref{Basic channel reservation process} shows the basic process of the channel reservation mechanism. After obtaining the transmission opportunity, if there is any packet remaining in the queue, the reservation source node initiates the reservation process. The reservation MAC protocol data unit (RSV MPDU) which indicates the reservation period is then performed and will be inserted into the data frame that is currently being transmitted. The neighbor nodes which receive the reservation frame will remain silent at the reservation time. To overcome the issue of the hidden terminal, the reservation destination node needs to insert the RSV MPDU into the ACK, allowing the hidden terminal of the reservation source node to receive the reservation information.

\subsection{Absolute Reservation Mechanism}\label{Absolute Reservation Mechanism}
\begin{figure}[htbp]
\centering
\includegraphics[width=3.5in]{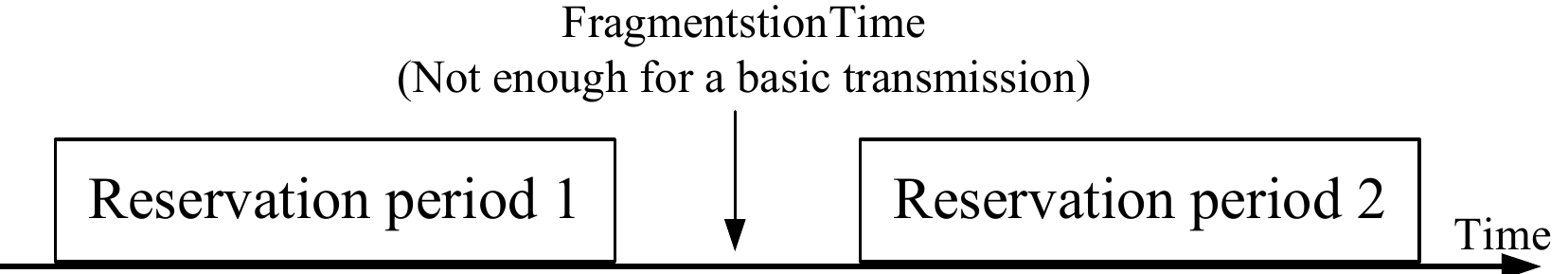}
    \caption{The fragmentation problem}
	\label{Fragmentation time description}
\end{figure}
The absolute reservation mechanism indicates the reservation period by explicitly stating the start and end time. This mechanism accurately determines the reservation offset time and efficiently guarantees QoS requirements. However, it is susceptible to fragmentation, as illustrated in Fig. \ref{Fragmentation time description}. In scenarios where the duration between two consecutive reservation periods is small, a basic transmission might be insufficient to access the channel, resulting in the wastage of channel resources. Due to the distributed approach employed by the reservation protocol in selecting reservation time, mitigating the issue of fragmentation time becomes challenging \cite{juedui1}.

Currently, most protocols using the absolute reservation mechanism simply divide traffic into two categories and make reservations for real-time traffic. Some researches \cite{juedui2,juedui3}, proposed reservation mechanisms for real-time traffic that improve channel utilization by releasing unused channel resources in a timely manner. These reservation mechanisms effectively guarantee the delay of high-priority traffic, but the collision issues of ordinary traffic are still difficult to resolve. Moreover, the fragmentation problem of reservation resources may impact the performance of ordinary traffic.

To accurately ensure the traffic quality requirements, some researches proposed more detailed reservations based on the minimum bandwidth or delay requirements of the traffic. For example, research \cite{juedui4} presents a hybrid protocol that divides the superframe into the high-priority traffic reservation periods and the low-priority traffic contention periods. The central control node allocates the resources of the reservation period according to the minimum bandwidth requirement, and the low-priority traffic accesses the channel through random contention. This mechanism assigns reservation resources based on QoS requirements and effectively ensures the real-time traffic QoS requirements when the network load is light. However, it is limited by the inability to adjust the reservation parameters according to the network conditions and the lack of considering the collision problem of ordinary traffic while ensuring the traffic requirements of real-time traffic.
Researches \cite{juedui5,juedui6} establish mathematical models to analyze the impact of reservation parameters and allocated bandwidth on the average traffic delay and successful transmission rate. The study aimed to determine the appropriate reservation parameters for optimal performance when there is a single type of real-time traffic in the network. This work can be combined with our study to conduct a comprehensive analysis of the reservation mechanism.
\subsection{Relative Reservation Mechanism}\label{Relative Reservation Mechanism}

The research \cite{xiangdui1} describes a typical relative reservation mechanism. The reservation source node indicates the residual backoff value in the transmitted packet. The neighbor node records the residual backoff value and maintains it according to CSMA. It will select the unreserved backoff value when the transmission is initiated. If all neighbor nodes correctly receive the reservation information, the neighbor nodes will not select the same backoff value as the reservation source node, to avoid collision.


The relative reservation mechanism carries out reservation through selecting the backoff value and does not divide the channel resources into multiple periods to avoid the fragmentation problems. However, it is difficult to accurately determine the reservation time using the backoff value, as the specific reservation time is related to the proportion of the busy channel duration. When other nodes frequently access the channel, the reservation time may be delayed. Therefore, for real-time traffic, the relative reservation mechanism can only reduce the collision but cannot guarantee QoS strictly.

Among the existing relative reservation mechanisms, some related protocols employ differentiated reservations for various traffic types. Researches \cite{xiangdui2,xiangdui3} are based on a deterministic Semi-Random backoff reservation mechanism. That is, after a STA successfully obtains the channel transmission opportunity, the AP assigns it a collision-free fixed backoff value in the ACK. Different traffic types use different fixed backoff values to achieve priority. However, the fixed backoff value of each traffic is still directly specified according to EDCA parameters and is not adjusted according to the actual business requirements. Although priority is introduced into the channel reservation mechanism, the differential parameters are directly specified based on priority. When the number of nodes is large and the network load is heavy, the directly specified backoff value cannot accurately guarantee the QoS requirements.


\subsection{Problem Analysis of the Existing Researches}\label{Existing reservation mechanism problems}
The existing reservation mechanisms have the following problems and have been fully considered in this paper.

(1) The existing researches used one of the absolute and the relative resevation, which can not meet the different requirements of real-time traffic and ordinary traffic.

(2) Most absolute reservation mechanisms divide traffic into two priority levels and only guarantee the high-priority traffic. Therefore, the transmission performance of ordinary traffic is often not guaranteed. Meanwhile, even in high-priority traffic, different requirements may exist. The absolute reservation mechanism has its own fragmentation problems. Therefore there is a threat to guarantee requirements of all types of traffic.

(3) Some relative reservation mechanisms differentiates traffic priorities and reserves resources accordingly. The differentiated parameters are usually directly specified based on priority and is difficult to accurately guarantee QoS requirements.


\section{System Model}\label{System Model}

Consider a densely deployed distributed wireless network, where all nodes access the channel through the proposed CSMA-based channel reservation mechanism. The protocol is primarily designed for high-load scenarios with dense deployment, where each traffic has different QoS requirements.

The proposed protocol employs a channel reservation mechanism based on the CSMA. The network under consideration has $k_{tot}$ types of traffic, where exists $k_{low}$ types of real-time traffic and $k_{ord}$ types of ordinary traffic with no strict delay requirements.


\begin{figure}[htbp]
\centering
\includegraphics[width=1.2in]{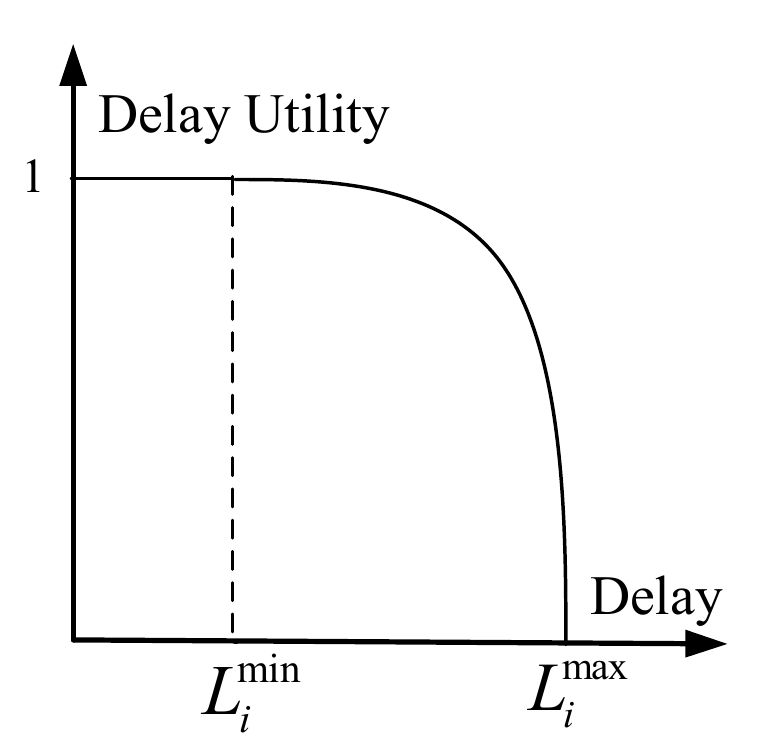}
    \caption{The relation between delay utility and delay}
	\label{Calculation of traffic delay utility}
\end{figure}

In wireless networks, utility serves as a metric to quantify the level of user satisfaction with transmission performance. Each traffic type is associated with a specific utility function that characterizes its performance. The value of the utility function approaches $1$ if the delay is less than the minimum tolerance delay ($L_i^{min}$) and gradually approaches $0$ if the delay exceeds the maximum tolerance delay ($L_i^{max}$) \cite{uti1}. The existing research \cite{uti2} has proposed the following Z-shaped function for calculating the delay utility,

\begin{equation*}
{U_i(T_i)} = \begin{cases}
1,&{T_i\in(0,L_i^{min})} \\ 
1-e^{-k(T_i-L_i^{max})^2},&T_i\in[L_i^{min},L_i^{max}] \\ 
{0,}&T_i>L_i^{max}
\end{cases}
,
\end{equation*}
where $k$ is a constant satisfying $1-e^{-k(T-L_i^{max})^2}=1$ and $T_i$ represent the delay of traffic $i$.


\section{Proposed DSGARM Protocol Process}\label{PROPOSED DSGARM PROTOCOL PROCESS}

In this section, we present the proposed hybrid protocol that combines absolute reservation and relative reservation mechanisms. 
Firstly, we give some related definitions before our further discussions.
\begin{definition}\label{definitionReservationWeigh}
Reservation Weight

Each traffic is assigned a reservation weight, denoted by $\alpha_i$, which reflects its importance. The reservation parameters are calculated based on the weight.
\end{definition}
\begin{definition}\label{definition5}
Reservation blocking status

Assuming the maximum interval between two consecutive reservation transmissions is $k$ periods, which is determined by the maximum delay requirements of the traffic, we define $k$ as the reservation blocking quantity. If all the first $k$ periods are already reserved by other nodes, it indicates a reservation blocking status. The probability associated with this scenario is denoted as $P_{bk}^k$.
\end{definition}

\subsection{The Proposed Hybrid Reservation Process}\label{The proposed hybrid protocol framework}

\begin{figure}[htbp]
\centering
\includegraphics[width=3.5in]{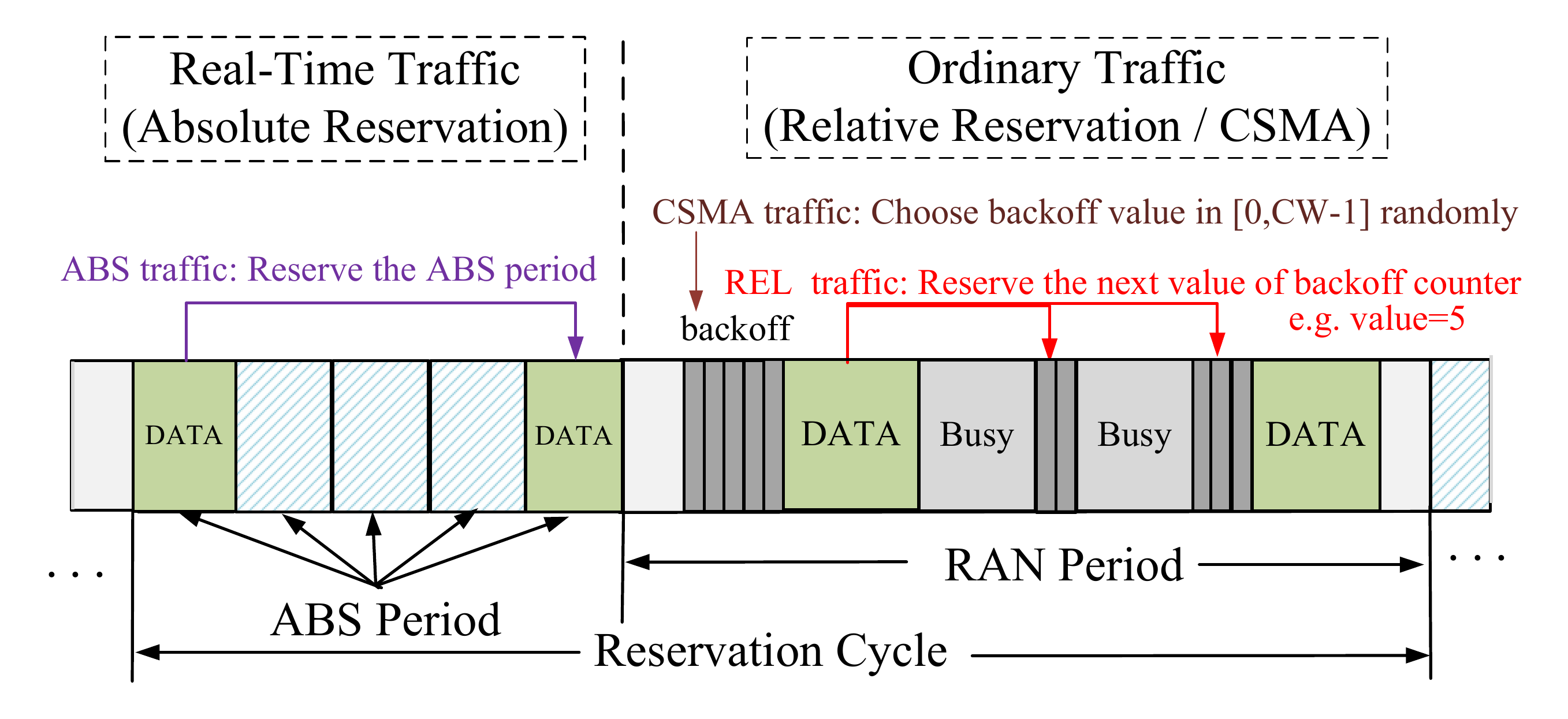}
    \caption{Proposed hybrid reservation process}
	\label{Reservation resource period}
\end{figure}

The proposed hybrid reservation protocol process is illustrated in Fig. \ref{Reservation resource period}. The channel resources are divided into the ABS periods and the RAN periods, which are reserved for real-time traffic and ordinary traffic, respectively. In the ABS periods, the real-time traffic utilizes the absolute reservation, which is defined as ABS traffic. Among the $k_{ord}$ types of traffic utilizing the RAN period, there are $k_{rel}$ types of traffic with relatively high-priority and $k_{cs}$ types of traffic with lower priority, using the relative reservation and CSMA, respectively.
The two types of traffic mentioned above are referred to as REL traffic and CSMA traffic, respectively.

The duration of each reservation cycle is predetermined by the protocol. When a node sends the RSV MPDU, it includes the information about the beginning time of the subsequent reservation cycle. Then the neighboring nodes synchronize the reservation cycle with this information. Let $\varepsilon_{abs}$ be the proportion of channel resources for absolute reservation, adjusted based on the real-time traffic volume in the network. In order to prevent excessive dominance of channel resources by high-priority traffic, the maximum proportion of resources allocated for absolute reservation is bounded by $\varepsilon_{abs}^{max}$.

The duration of each reservation in ABS periods is determined based on the queue status, subject to a maximum duration constraint $\varepsilon_{abs}^{max}$. To minimize the fragmentation, the reservation periods are tightly packed at the beginning of a reservation cycle. When initiating a reservation transmission, the node randomly selects a transmission period from $N$ idle periods that are not reserved by other nodes. The periods that can be chosen is donated by the Reservation Window (RSW) and the number of chosen periods is $N$. Each node determines the $N$ corresponding to the type of traffic, as per the algorithm outlined in Section \ref{Adaptive Absolute Reservation Parameter Algorithm}. If a packet has been waiting in the queue for a significant amount of time, a temporary shorter reservation offset time can be used and the average reservation offset time must still satisfy the calculation result.

During the $RAN$ period, the relative reservation protocol employed in this paper randomly selects a reserved backoff value from the future $N$ idle slots. In order to prevent collisions upon receiving a reserved backoff value $I_{BO}$, if the current backoff value of the neighboring node equals $I_{BO}$, then a new backoff value is selected randomly from the range $[I_{BO}-I_{ch}, I_{BO}+I_{ch}]$, where $I_{ch}$ is a constant used to avoid the collision. The algorithm presented in Section \ref{Adaptive Absolute Reservation Parameter Algorithm} is employed to dynamically adjust the size of the reservation window, which is updated dynamically to accommodate the current traffic volume.


This paper focuses on densely deployed scenarios. Previous research \cite{quqiaoXindao} has established a mathematical model that indicates nodes located nearby exhibit consistent channel perceptions. This observation suggests that the relative reservation mechanism can effectively reserve corresponding slots on densely deployed scenarios. In Section \ref{Problem of neighbor channel sensing capability}, we present simulation results to investigate scenarios where nodes are widely dispersed and possess inconsistent channel perceptions. The simulations show that the relative reservation mechanism outperforms the CSMA mechanism even in such scenarios.

\subsection{Adaptive Reservation Parameter Determination Algorithm}\label{Adaptive Absolute Reservation Parameter Algorithm}
\subsubsection{Process of Selecting Reservation Parameter}\label{Process of Reservation Parameter Decision  Process}
\ 
\newline 
\indent 
For absolute reservation and relative reservation mechanism, the adaptive reservation parameter determination algorithm is predicated on the idea that the reservation window sizes of different traffic are adjusted based on the traffic's $\alpha$ and the total reservation transmission ratio. For absolute reservation, the $\alpha$ of traffic is adjusted according to the maximum tolerance delay requirement and priority. 




\begin{algorithm}[H]
    \caption{Basic Reservation Process}
    \label{AL-Reservation Process}
    \begin{algorithmic}[1]
        \IF{Selecting the reservation period}
            \IF{For absolute reservation}
                \STATE{Choose a period in the reservation window}
					\IF {There is no reservation period to meet the delay requirement}
      				\STATE{Execute the soft reservation process}
					\ENDIF
            \ELSIF{For relative reservation}
                \STATE{Choose an idle backoff value in reservation window}
            \ENDIF
            \STATE{Indicate $\alpha$, the reserved period and the remaining tolerance delay of the traffic in RSV MPDU}
        \ELSIF{Receiving reservation information}
\STATE{Keep silent at the reservation period}
\STATE{Update the active neighbor node list}
\STATE{Recalculate the reservation window and $\alpha$ according to Algorithm \ref{ChangeAlfa} periodically}
        \ENDIF
    \end{algorithmic}
\end{algorithm}

The process for maintaining reservation information is outlined in Algorithm \ref{AL-Reservation Process}. When a node initiates a reservation, the reservation frame RSV MPDU indicates the reservation weight $\alpha$, the duration of the reservation period, and the remaining tolerance delay of the traffic. The reservation duration is determined by the traffic arrival rate which can be estimated by making use of the traffic prediction.  The traffic prediction has been extensively discussed in the researches \cite{Pro1,Pro2}. Then all the nodes maintain the active list of neighbor node reservation information.

\begin{algorithm}[htbp]
\caption{Adaptive Reservation Determination Parameter}
\label{ChangeAlfa}
\begin{algorithmic}[1]
\REQUIRE $\mathbb{A} = {\alpha_1, \ldots, \alpha_{k_{tot}}}, P_{RSV}^{sui}, \mathbb{L}={L_1^{c,min},\ldots,L_{k_{tot}}^{c,min}}, \mathbb{D}={d_{1}, \ldots,d_{k_{tot}}}$

\FOR{$i=1$ to $k_{low}$} 
\IF{$P_{bk}^{d_i} > Th_{bk}$} 
\STATE Change $P_{RSV}^{suit}$ to meet $Th_{bk}$
\ENDIF
\ENDFOR

${T_1,...,T_{k_{low}}}, \mathbb{N}^{abs} \gets$ ChangeReservationWindow($\mathbb{A}, P_{RSV}^{suit}$) 

\FOR{$i=1$ to $k_{low}$} 
\IF{$T_i > L_{min}$}
\STATE Change $\alpha_j$ to meet $L_{min}$
\ENDIF
\ENDFOR

\IF{there are still traffic that do not meet the requirement}
\FOR{$i=1$ to $k_{low}$}
\STATE $L_i^c \gets L_i^{typ}$
\ENDFOR
\ENDIF

$\mathbb{N}^{rel} \gets$ ChangeReservationWindow($\mathbb{A}, \varepsilon_{rel}$)

\ENSURE $\mathbb{A}, P_{RSV}^{sui}, \mathbb{N}^{abs}, \mathbb{N}^{rel}$
\end{algorithmic}
\end{algorithm}


\subsubsection{Absolute Reservation Parameter Determination Process}\label{Absolute Reservation Parameter Process}
\ 
\newline 
\indent The algorithm is comprised of three components: the correction of the total reservation transmission proportion, which is denoted by $P_{rsv}^{sui}$, adaptive adjustment of the reservation window, and the correction of reservation weight.

In the first step, the restriction condition on the $P_{rsv}^{sui}$ is modified based on the reservation blocking probability, as defined in Definition \ref{definition5}. A higher total reservation transmission proportion results in an increased likelihood of reservation blocking. Thus, the allowed reservation blocking quantity for each traffic is determined by considering the tolerance delay $d_i$. The $P_{rsv}^{sui}$ is adjusted until all traffic satisfies the requirement $P_{bk}^i\leq {Th}_{bk}$, where ${Th}_{bk}$ denotes the confidence level for the reservation blocking probability. This confidence level represents the maximum allowable probability for the occurrence of a reservation blocking state and is computed using Theorem \ref{theorem3}.

The next step is adaptive adjusting the reservation window, which is based on the reservation weight of each traffic, the reservation information from neighboring nodes, and the $P_{rsv}^{sui}$. Initially, it sets the search step size for the initial iteration. Firstly, for each reservation window size corresponding to the traffic in each search step, the total reservation transmission proportion and delay for each traffic are calculated using the mathematical model described in Section \ref{DifRsvModel}. Since the reservation window size is inversely related to the total reservation transmission proportion, the algorithm seeks reservation window sizes that satisfy the $P_{rsv}^{sui}$. Subsequently, based on the previous calculation results, if the delay of any traffic does not meet the requirements, the reservation weight of that specific traffic is adjusted until the requirements are satisfied.

In the third step, the process of adjusting the reservation weight calculates a suitable $\alpha$ for each traffic that meets the delay requirements, taking into account the current network conditions. Under heavy load or burst traffic scenarios, packets may experience prolonged waiting time in queues. To mitigate this issue, a more strict delay limit $L_i^{c,min}=\lambda L_i^{min}$ is employed as the criterion for reservation weight calculation.

The overall process is described in Algorithm \ref{ChangeAlfa}. Initially, the reservation window size for each traffic is calculated by the original reservation weight based on the traffic priority. This calculation is performed using the reservation window adjustment process. If the delay requirements of all real-time traffic cannot be satisfied after adjusting the reservation weight, then the algorithm is rerun utilizing the typical delay as the constraint. The typical delay is denoted as $L_i^{typ}=(L_i^{min}+L_i^{max})/2$. 

When the algorithm determines that the reservation weight for a particular type of traffic needs to be adjusted, it signifies a heavy load condition. In such situation, the maximum reservation offset time that fulfills the QoS requirement for this traffic will be utilized consistently.


\subsubsection{Relative Reservation Parameter Determination Process}\label{Relative Reservation Parameter Process}
\ 
\newline 
\indent Regarding the relative reservation mechanism, the transmission ratio for relative reservation mechanism is denoted as $\varepsilon_{rel}$. That is, the relative reservation window should be configured to fulfill that the proportion of relative reservation transmission duration to the total RAN periods is $\varepsilon_{rel}$.

The calculation of the relative reservation transmission ratio relies on the reservation window of different priority traffic. The mathematical model introduced in Section \ref{DifRsvModel} offers a means to calculate the relative reservation transmission ratio based on the specified reservation parameters of each traffic and the number of nodes. As a result, the reservation window adjustment process outlined in Algorithm \ref{ChangeAlfa} is utilized to determine the relative reservation parameters using the parameters $\alpha$ and $\varepsilon_{rel}$.
\subsection{Improvement for Fragmentation Problem}\label{Improvement for Fragmentation Problem}
\begin{figure}
\centering
\includegraphics[width=2.7in]{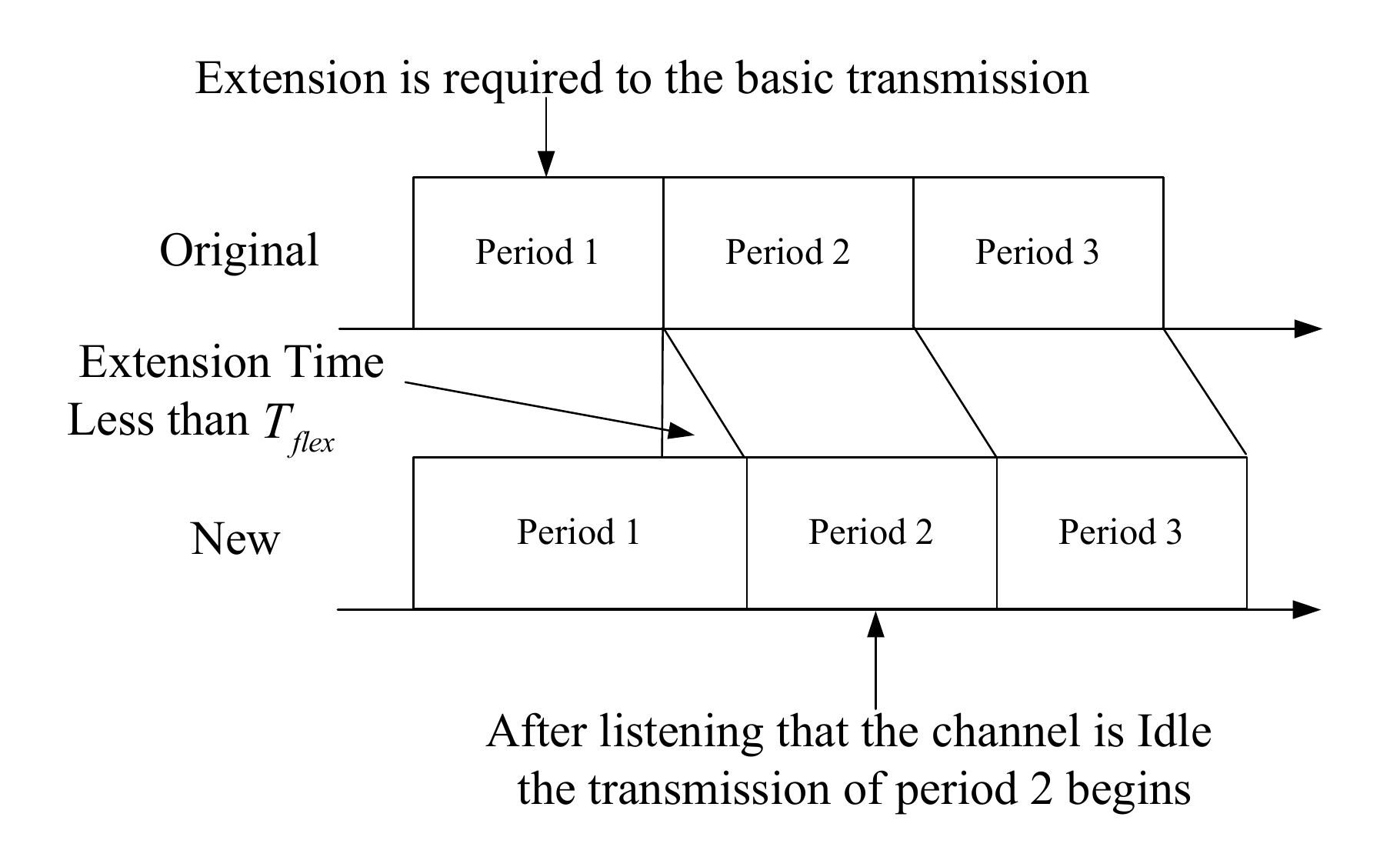}
    \caption{ABS period flexible processing}
	\label{ABS period flexible processing}
\end{figure}
In scenarios where all nodes are within the two-hop range, an enhanced fragmentation optimization mechanism can be employed to improve efficiency. This mechanism introduces a flexible reservation duration, denoted as $T_{fle}$, within the ABS period. This duration is significantly shorter than the basic transmission duration. When a node gains the channel transmission opportunity and initiates the transmission, if the transmission extends into the next ABS period and the additional reservation time required is less than $T_{fle}$, the node is permitted to utilize that portion of time for the transmission. During the reservation in ABS periods, the period ID indicates the specific period in the current reservation cycle. Neighbor nodes determine whether the transmitting node's period has arrived by evaluating the ID provided by the transmitting node. 

\subsection{Soft Reservation}\label{Soft Reservation}

During the ABS period, the unreserved periods permit channel access through the CSMA once the corresponding time is reached. In the idle ABS periods, if there exists some traffic in the queue but no avaliabe reservation opportunities, the nodes can employ a reduced contention window for channel access, and it is same to the case that the waiting time of real-time traffic in the queue has exceeded the threshold.

For networks with fluctuations in network load, the following situation can arise. During periods with low network load, existing reservation resources can adequately meet transmission needs, the nodes utilized smaller reservation windows according to the algorithm. If some nodes unexpectedly generate a large number of real-time traffic, there are not enough nearest reservation periods to meet the delay demand of the newly arrived traffic.

To address these issues, this paper presents an enhanced Soft-Reservation mechanism. In research \cite{juedui2,juedui4}, it has investigated Soft-Reservation mechanisms that employ reservation cancellation frames to release the allocated channel resources when the reservation source node does not have sufficient traffic to transmit. This paper adds a negotiated soft reservation mechanism based on it.

In the proposed Soft-Reservation mechanism, when the network load is light, the node $n_{k1}$ can reserve the period (e.g. period $t_{les}$) with less offset time than the maximum tolerance delay. These reserved periods are donated by Soft-Reservation periods. Furthermore, the latest period $t_{lat}$ that satisfies the maximum tolerance delay of traffic is also specified in the RSV MPDU. When another node $n_{k2}$ is selecting the reservation period but no available periods meet the requirement. Then it selects a period (e.g. period $t_{les}$)  from the recorded Soft-Reservation list. If there is any unreserved period $t_{idl}$ before the latest period $t_{lat}$ corresponding to $t_{les}$, the node $n_{k2}$ will negotiate to reserve $t_{les}$ while node $n_{k1}$ will change to occupy the $t_{idl}$. According to the above mechanism, the traffic of both nodes can meet the delay requirements. 


\section{Performance Analysis}\label{PERFORMANCE ANALYSIS}
\subsection{Fundamental Assumption}\label{Fundamental assumption}


Regarding the impact of non-ideal channels, this paper addresses the decoding failure of both data frames and RSV MPDUs, which carry reservation information. Let $P_e$ be the probability of a data frame being failed decoded by the receiver. Note that the RSV MPDU employs a more robust encoding mode. Thus, we assume that the probability of successful decoding of the RSV MPDU is $h_r=1-(P_e)^{G_r}$, where $G_r\geq 1$.

The analysis primarily centers on the performance under conditions of saturated traffic. In such a scenario, the arrival of new data is immediately followed by successful transmissions and a new reservation will be initiated. It is assumed that all nodes are located within a fully connected region. The above hypothesizes have been widely adopted in related researches which can refer to \cite{intro8, Paperbianqi}.

Let $T^{rsv}_{suc}$ and $T^{rsv}_{fai}$ be the average duration of successful and failed reservation transmission, respectively. Similarly, let $T_{suc}^{cs}$ and $T_{fai}^{cs}$ be the average duration of successful and failed CSMA transmissions, respectively.

This section assumes that each node is capable of generating only one type of traffic. Nodes using absolute reservation, relative reservation and CSMA are denoted as ABS nodes, REL nodes and CSMA nodes, respectively. The number of three types of nodes are respectively denoted as $m_{abs}$, $m_{rel}$ and $m_{cs}$. The total number of network nodes is denoted as $m$.

\subsection{Differentiated Reservation Model}\label{DifRsvModel}

The proposed model considers the varying sizes of reservation windows for all nodes, which calculates the probability of different reservation periods selected by each node and the reservation transmission ratio of each node. For the relative reservation mechanism, each period corresponds to a slot, which can be idle or used for transmission. And for the absolute reservation mechanism, each period represents a transmission process, encompassing transmissions initiated by absolute reservation mechanism or CSMA. The model performs a separate calculation for each reservation mechanism. In this section, we define the number of nodes using the corresponding reservation mechanism to be solved as $m$.

When a tagged node $n_k (1\le k\le m)$ initiates a reservation transmission, it randomly chooses an idle period from the $N_k$ idle periods as its reservation period. Assuming it is in a steady state when a node is selecting the reservation period,  it is possible that the future reservation periods have been occupied by other $m-1$ nodes. Consequently, the node can reserve a period from the future $C_k$ transmission periods, where $C_k=N_k+m-1$.

\begin{definition}\label{definition1}
Left reservation window and right reservation window


\begin{figure}
\centering
\includegraphics[width=3.5in]{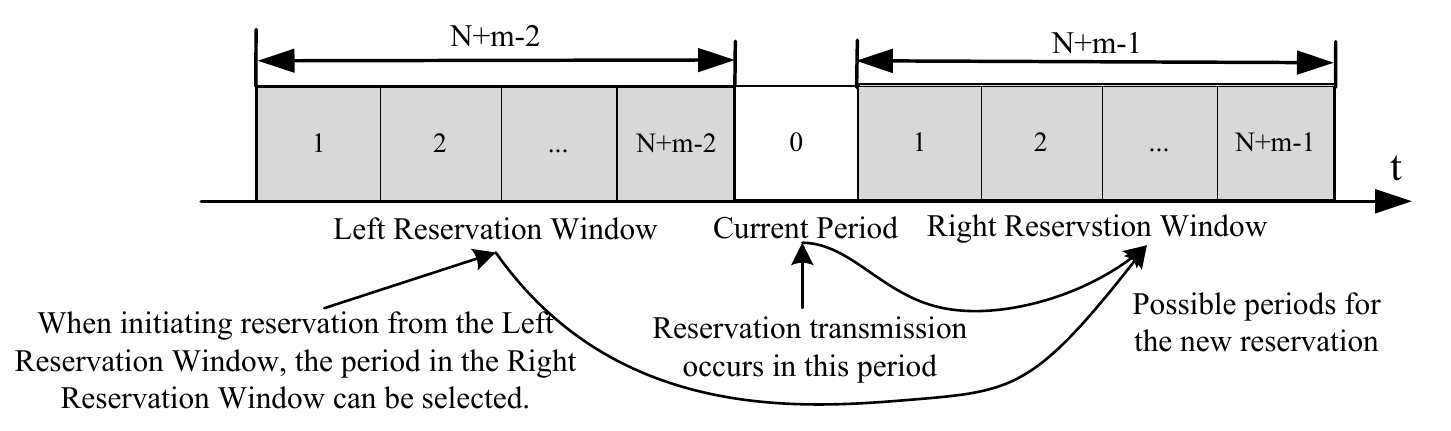}
\caption{Reservation window diagram}
\label{ReservationWindow}
\end{figure}

As shown in Fig. \ref{ReservationWindow}, the range of available periods for reservation is defined as the right reservation window with the size $C_k$. Additionally, we define the left reservation window, which allows the selection of a period in the right reservation window upon initiating the reservation transmission. Therefore, the size of the left reservation window is $C_k-1$. The $j$-th period in the left reservation window is denoted by $T_{all}^{j}(j\in[1,C_k-1])$, while the $j$-th period in the right reservation window is denoted as $T_{r}^{j}$ or $T_{all}^{C_k-1+j} (j\in[1,C_k])$.
\end{definition}

According to Fig. \ref{ReservationWindow}, the probability of the later period in the right reservation window being reserved by other nodes decreases. The fundamental concept of the model is as follows: As the current period progresses from the start to the end of the left reservation window, the probability of selecting each period as a reservation period are calculated sequentially. Then the probabilities of each period in the right reservation window being already reserved are determined. Finally, the probabilities of each period in the right reservation window being chosen as a reservation period are computed.


\begin{definition}\label{definition2}
 Active reservation probability and Idle period probability

Given the current period is $T_{all}^{t} (j\in[1,2C_k-1])$, the probability that node $n_k$ has a reservation for $T_{all}^{i}$ ($i\in[t+1,t+C_k]$) is defined as the active reservation probability, represented by $R_t^{k,i}$. The probability that $T_{all}^i$ is idle, i.e., has not been reserved by any node, is defined as the idle period probability, represented by $S_t^i$. It is obvious that
\begin{equation}
S_t^i=1-\sum_{k=1}^mR_t^{k,i}=1-R_t^i.
\label{eqSti}
\end{equation}
\end{definition}

\begin{definition}\label{definition3}
Active Reservation Period Range Probability

Let $P_{rsv,k}^{(i,j),d} (i\in[1,2C_k-1],j\in[i-1,2C_k-1],d\in[0,j-i+1],k\in[1,m])$ denote the probability that there are $d$ periods reserved by other nodes from $T_{all}^i$ to $T_{all}^j$. 
\end{definition}

In order to analyze the probability of node $n_k$ choosing period $T_{r}^j$ as the reservation period in the right reservation window, the following Theorem \ref{Theo1} are presented. 

\begin{theorem}
\label{Theo1}
For a given node $n_k$ which is selecting the reservation period, let $T_{all}^s(s\in[1,2C_k-1])$ be the current period and $T_{all}^d(d\in[s+1,s+C_k])$ be a period in the right reservation window of $T_{all}^s$. Then the probability of selecting $T_{all}^d$ as the reservation period is obtained as
\begin{equation}	
P_k(T_{all}^s,T_{all}^d)=\frac{S_s^d}{\sum_{z=s+1}^{s+C_k}S_s^z}.
\label{equtheorem1}
\end{equation}
\label{theorem1}
\end{theorem}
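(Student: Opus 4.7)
My plan is to derive (\ref{equtheorem1}) by combining the uniform-selection rule under which node $n_k$ picks its reservation period with the marginal idle probabilities $S_s^d$ already introduced in (\ref{eqSti}). The first step is to formalise the selection mechanism. According to the protocol description, when node $n_k$ initiates a reservation while the current period is $T_{all}^s$, it chooses a period uniformly at random from the idle periods lying in its right reservation window $\{T_{all}^{s+1},\ldots,T_{all}^{s+C_k}\}$. Letting $X_i\in\{0,1\}$ indicate whether $T_{all}^i$ is idle, we have $E[X_i]=S_s^i$ and
\begin{equation*}
P_k(T_{all}^s,T_{all}^d) \;=\; E\!\left[\frac{X_d}{\sum_{z=s+1}^{s+C_k}X_z}\right].
\end{equation*}

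The second step is to reduce this ratio of random variables to the ratio of expectations that appears in the theorem. I would argue by symmetry and normalisation: conditional on any realised idle configuration every idle period is selected with equal probability, so the marginal probability that $T_{all}^d$ is picked depends on $d$ only through the marginal $S_s^d$ and must therefore take the form $c\cdot S_s^d$. The constant $c$ is then pinned down by the normalisation condition $\sum_{d=s+1}^{s+C_k}P_k(T_{all}^s,T_{all}^d)=1$, which immediately yields $c=1/\sum_{z=s+1}^{s+C_k}S_s^z$. A brief sanity check that $S_s^d$ is non-decreasing in $d$ within the right reservation window (periods farther away are less likely to have been pre-empted by the $m-1$ competing nodes) confirms that the expression is consistent with the intuition stated right after Definition \ref{definition3}.

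The main obstacle is the implicit mean-field step: the identity $E[X_d/\sum_z X_z]=S_s^d/\sum_z S_s^z$ does not hold in general, because expectation does not commute with division. To make the argument clean, I would state explicitly the modelling assumption, consistent with the saturated-traffic Markov framework of Section \ref{Fundamental assumption}, that the idleness indicators $(X_i)$ are treated as approximately independent across periods in the stationary regime, so that $\sum_z X_z$ concentrates around its mean and the ratio of expectations coincides with the expectation of the ratio used throughout the rest of the analysis. Under this mean-field closure, the symmetry-plus-normalisation argument above becomes exact and (\ref{equtheorem1}) follows.
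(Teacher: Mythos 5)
Your argument is essentially the paper's own proof: the paper likewise asserts that the probability of selecting a period is proportional to its idle probability $S_s^d$ and then applies the normalization $\sum_{d=s+1}^{s+C_k}P_k(T_{all}^s,T_{all}^d)=1$ to obtain (\ref{equtheorem1}). The only difference is that you make explicit, via the $E\left[X_d/\sum_{z}X_z\right]$ formulation and the mean-field closure, the approximation that the paper states as a bare proportionality claim --- a fair and somewhat more candid rendering of the same step.
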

\begin{proof}
The total probability of selecting a period from the right reservation window is
\begin{equation}	
\sum_{d=s+1}^{s+C_k}{P_k(T_{all}^s,T_{all}^d)=1}.
\label{equt1Sum1}
\end{equation}

Since the probability of selecting a period as reservation period is proportional as the reservation period to its idle probability, we have
\begin{equation}	
\begin{split}
P_k(T_{all}^s,T_{all}^{s+1}):\cdots:P_k(T_{all}^s,T_{all}^{s+C_k})=S_s^{s+1}:\cdots:S_s^{s+C_k}.
\label{Ptk1T}
\end{split}
\end{equation}

Combining Eq. (\ref{equt1Sum1}) and Eq. (\ref{Ptk1T}), Eq. (\ref{equtheorem1}) holds immidiately.
\end{proof}

\begin{definition}\label{definitionHasRsvSta}
Reservation Transmission Period, In-Reservation Status, and Non-Reservation Status

For each period, the probability 
	for the reservation transmission period of a tagged node $n_k(k\in[1,m])$ is represented by $P_{IR}^k$. After the node $n_k$ has successfully initiated a reservation for the future reservation period, it is considered to be in the in-reservation status, denoted by the probability $P_{has}^{rsv,k}$. Conversely, after the RSV MPDU of node $n_k$ decoding fails or collision occurs and tries to obtain the transmission opportunities through CSMA, it is in a non-reservation status, and the corresponding probability is $P_{no}^{rsv,k}$. And the average probability of all nodes for these two probabilities is $P_{has}^{rsv}$ and $P_{no}^{rsv}$, respectively.
\end{definition}
Theorem \ref{theorem2} is used to obtain the probability of each period being reserved by the node $n_k$, which is denoted by $P_{IR}^{k}$.
\begin{theorem}
Assuming that the tagged node $n_k(k\in[1,m])$ chooses $T_{r}^j(j\in[1,C_k])$ as the reservation period with the probability of $P_{RSV}^j$, and $P_{cs}^{suc}$ represents the probability of successfully initiating a transmission through CSMA. Then we have
\begin{equation}	
\label{HasRsvStatusP}
		\left\{
		\begin{aligned}
&P_{IR}^k=\frac{1}{(1-h_r)/P_{cs}^{suc}+\sum_{k=1}^{C_k}{\sum_{j=k}^{C_k}{P_{RSV}^j}}}\\
			&P_{no}^{rsv,k}=\frac{1-h_r}{(1-h_r)+P_{cs}^{suc}\sum_{k=1}^{C_k}{\sum_{j=k}^{C_k}{P_{RSV}^j}}}\\
			&P_{has}^{rsv,k}=1-P_{no}^{rsv,k}
		\end{aligned}
		\right.
.
\end{equation}
\label{theorem2}
\end{theorem}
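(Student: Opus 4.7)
The plan is to view the sequence of events at node $n_k$ as a renewal process whose regeneration epochs are the starts of $n_k$'s own transmissions, and then apply the renewal reward theorem twice: once to obtain $P_{IR}^k$ as the reciprocal of the expected cycle length, and once to obtain $P_{no}^{rsv,k}$ as the expected non-reservation time per cycle divided by the expected cycle length. The third identity, $P_{has}^{rsv,k}=1-P_{no}^{rsv,k}$, is then immediate because the two statuses partition the timeline.

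First I would compute the expected cycle length $E[L]$. After any transmission, $n_k$ picks $T_r^j$ as its next reservation target with probability $P_{RSV}^j$, so the scheduled offset has mean $\sum_{j=1}^{C_k} j\,P_{RSV}^j$. Interchanging the order of summation rewrites this as $\sum_{l=1}^{C_k}\sum_{j=l}^{C_k} P_{RSV}^j$, which is exactly the double sum appearing in (\ref{HasRsvStatusP}). With probability $h_r$ the RSV MPDU is correctly decoded and the reservation is honored, so the transmission occurs precisely at the scheduled period and no additional delay is incurred. With probability $1-h_r$ the reservation fails to propagate and $n_k$ falls back to CSMA; since each slot succeeds independently with probability $P_{cs}^{suc}$, the extra wait is geometrically distributed with mean $1/P_{cs}^{suc}$. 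Combining gives $E[L] = \sum_{l=1}^{C_k}\sum_{j=l}^{C_k} P_{RSV}^j + (1-h_r)/P_{cs}^{suc}$, and inverting recovers the first line of (\ref{HasRsvStatusP}).

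For $P_{no}^{rsv,k}$, only the failed-RSV branch contributes to non-reservation time, and its CSMA phase lasts $1/P_{cs}^{suc}$ periods in expectation, so the expected non-reservation time per cycle is $(1-h_r)/P_{cs}^{suc}$. Dividing by $E[L]$ and clearing the fraction by multiplying numerator and denominator by $P_{cs}^{suc}$ yields the second line of (\ref{HasRsvStatusP}); the third line is immediate from the partition.

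The step I expect to be most delicate is justifying the additive, rather than convex-combination, decomposition of $E[L]$. The formula treats the scheduled offset $\sum j\,P_{RSV}^j$ as incurred unconditionally and appends the CSMA penalty only when RSV fails, which corresponds to the operational assumption that even after a failed RSV the node still waits for its self-chosen period $T_r^j$ before starting to contend, so a successful-RSV cycle has length $j$ while a failed-RSV cycle has length $j + 1/P_{cs}^{suc}$. Making this behavioural assumption explicit, and verifying that non-reservation status is entered only during the $1/P_{cs}^{suc}$ contention periods and not during the scheduled wait, is the portion of the argument that needs the most care; the rest of the proof is essentially bookkeeping with the renewal reward theorem.
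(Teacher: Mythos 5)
Your proof is correct and reaches exactly the formulas in Eq.~(\ref{HasRsvStatusP}), but it gets there by a different route than the paper. The paper models the same countdown process explicitly as a Markov chain $b^k(t)$ on the remaining reservation periods, writes the one-step transition probabilities $P\{j\,\vert\,1\}=h_rP_{RSV}^j$, $P\{0\,\vert\,1\}=1-h_r$, $P\{j\,\vert\,0\}=P_{cs}^{suc}P_{RSV}^j$, solves the balance equations for the stationary distribution $b_i^k$, normalizes, and reads off $P_{IR}^k=b_1^k$ and $P_{no}^{rsv,k}=b_0^k$. You instead treat the process as regenerative with cycles delimited by successive transmission epochs and apply the renewal--reward theorem, computing $E[L]=\sum_{l=1}^{C_k}\sum_{j=l}^{C_k}P_{RSV}^j+(1-h_r)/P_{cs}^{suc}$ and the expected non-reservation time per cycle. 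The two are mathematically equivalent here (the stationary mass of a state equals its expected occupation per cycle over the mean cycle length), but your version avoids setting up and solving the chain, makes the interchange-of-summation origin of the double sum $\sum_{l}\sum_{j\ge l}P_{RSV}^j$ transparent as a mean offset $\sum_j jP_{RSV}^j$, and automatically accounts for the state-$0$ sojourn in the normalizing denominator (the paper's written condition $\sum_{i=1}^{C_k}b_i^k=1$ omits $b_0^k$ and only matches the final formula if state $0$ is included, which your cycle-length bookkeeping handles without ambiguity). The behavioural point you flag as delicate --- that the node waits out its full scheduled countdown before entering the non-reservation state and contending --- is indeed the operative assumption, and it is exactly what the paper encodes by placing the decoding failure on the transition out of state $1$; so your additive decomposition of $E[L]$ is consistent with the paper's model.
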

\begin{proof}
Consider the stochastic process $b^k(t)$ that represents the remaining reservation periods for a given node $n_k$ at time $t$. The remaining periods mean the number of periods that needs to be waited before reaching the reservation period. Specifically, when $b^k(t)=1$, it means that the node's reservation period will reach in the next period. Then if the reservation information is decoded successfully by the destination node, it will initiated a new reservation and select the reservation period. When $b^k(t)=0$, it signifies a non-reservation state resulting from the failed reservation information decoding or collision. In the non-reservation state, the node will re-enter the in-reservation state after a successful CSMA transmission. Therefore, as shown in Fig. \ref{LeftReservationPeirod}, for a node with the reservation window size $N$, the one-step transition probabilities can be expressed as follows.
	\begin{eqnarray}
		\left\{
		\begin{aligned}
			&P \left\{ j \vert j+1 \right\} =1 \\ 
			&P \left\{ j \vert 1 \right\} =h_rP_{RSV}^j\\ 
			&P \left\{ j \vert 0 \right\} =P_{cs}^{suc}P_{RSV}^{j}\\
			&P \left\{ 0 \vert 1 \right\} =1-h_r&& 
		\end{aligned}
		\right.
,
	\end{eqnarray}
 where $j\in [1,C_k]$.

Assuming the probability of remaining reservation periods for node $n_k$ in the steady state is $b_i^k$, the following equations hold:
	\begin{eqnarray}
\label{EqForTheo2}
		\left\{
		\begin{aligned}
			&b_0^k=\frac{(1-h_r)b_1^k}{P_{cs}^{suc}}\\
			&b_i^k=(b_0^kP_{cs}^{suc}+b_1^kh_r)\sum_{j=i}^{C_k}{P^j_{RSV}}
		\end{aligned}
		\right.
\label{WentaiCsmab}
,
	\end{eqnarray}

where $i\in [1,C_k], k\in[1,m]$.
By solving the normalized condition $\sum_{i=1}^{C_k}{b_i^k}=1$, the probability of each state is
\begin{equation}	
\label{BiWenTai}
b_i^k=\frac{\sum_{j=i}^{C_k}P_{RSV}^{j}}{(1-h_r)/P_{cs}^{suc}+\sum_{z=1}^{C_k}\sum_{j=z}^{C_z}P_{RSV}^{j}}.
\end{equation}
\begin{figure}
    \centering
    \includegraphics[width=0.3\textwidth]{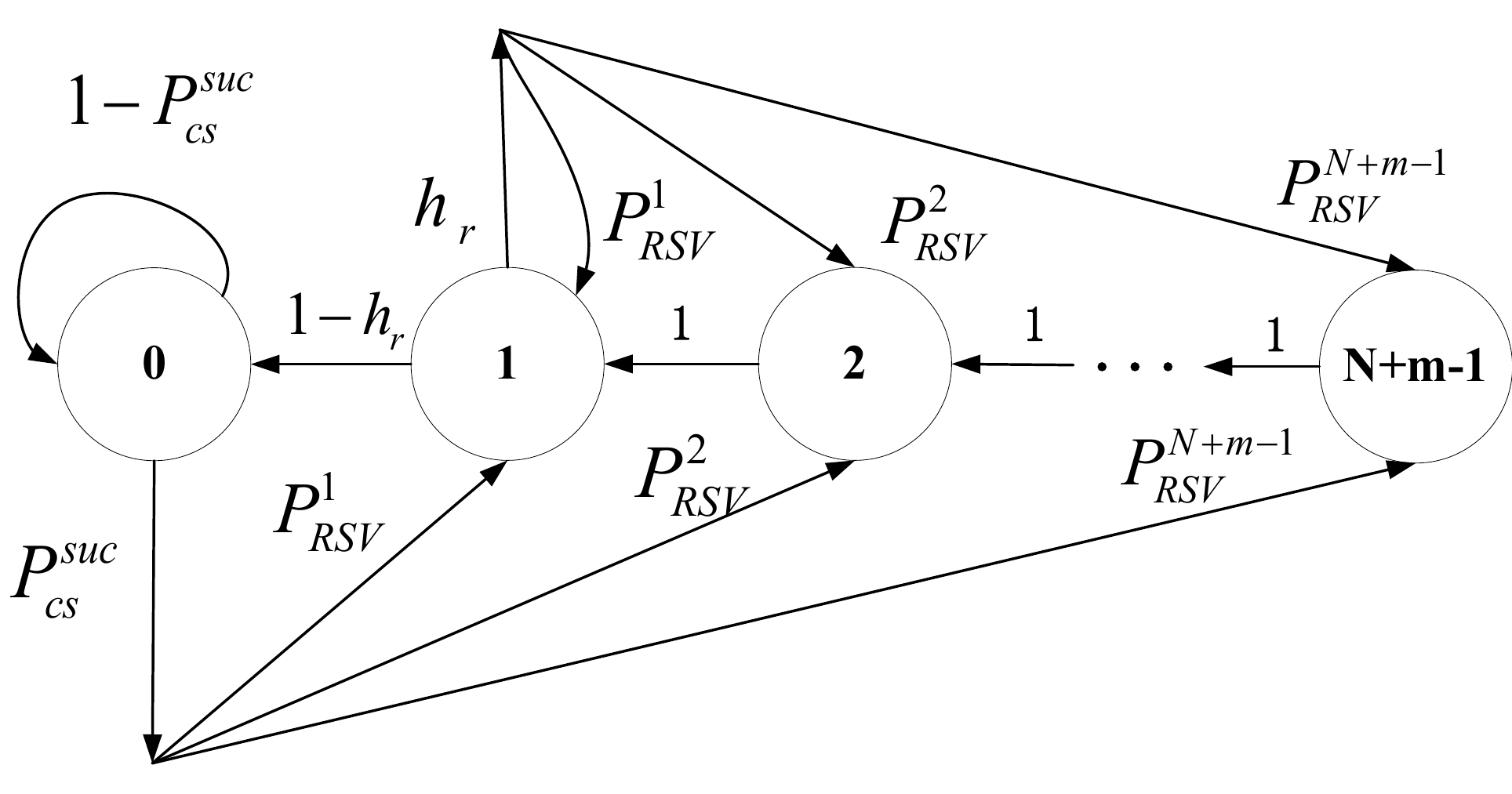}
    \caption{Status transition diagram of the remaining reservation periods}
	\label{LeftReservationPeirod}
\end{figure}

Since $b^k(t)=1$ and $b^k(t)=0$ corresponds to the reservation transmission period and non-reservation status, respectively. Therefore, substituting $i=1$ and $i=0$ into the Eq. (\ref{BiWenTai}), then Eq. (\ref{HasRsvStatusP}) holds.
\end{proof}

Furthermore, the reservation blocking probability can be obtained by using the Theorem \ref{theorem3} and Lemma \ref{lemma1}.

\begin{lemma} 
Given the current period $T_{all}^i(i\in[1,2C_k-1])$, the probability of the period $T_{all}^j(j\in[i-1,2C_k-1])$ in an active reservation state is $R_{i}^j$, then we have
\begin{equation}	
P_{rsv,k}^{(i,j),d}={P_{rsv,k}^{(i,j-1),d-1}R_{i}^j+P_{rsv,k}^{(i,j-1),d}(1-R_{i}^j)},
\end{equation}
\label{lemma1}
where $j\ge i-1$.
\end{lemma}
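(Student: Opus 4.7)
The plan is to prove this recursion via a standard law-of-total-probability argument, conditioning on the reservation status of the last period $T_{all}^j$ in the range. By Definition~\ref{definition3}, $P_{rsv,k}^{(i,j),d}$ is the probability that exactly $d$ of the periods $T_{all}^i, T_{all}^{i+1}, \ldots, T_{all}^j$ are reserved by nodes other than $n_k$, given that the current period is $T_{all}^i$. The event ``$d$ reservations in $[T_{all}^i, T_{all}^j]$'' naturally decomposes into two disjoint cases depending on whether or not $T_{all}^j$ itself contributes one of those $d$ reservations.

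First I would state the two cases explicitly. In the first case, $T_{all}^j$ is in an active reservation state (which, by the given hypothesis, occurs with probability $R_i^j$), so exactly $d-1$ of the earlier periods $T_{all}^i,\ldots,T_{all}^{j-1}$ must be reserved; this contributes $P_{rsv,k}^{(i,j-1),d-1}\cdot R_i^j$. In the second case, $T_{all}^j$ is not reserved (probability $1-R_i^j$), so all $d$ reservations must lie in $[T_{all}^i, T_{all}^{j-1}]$; this contributes $P_{rsv,k}^{(i,j-1),d}\cdot(1-R_i^j)$. Summing these two mutually exclusive and exhaustive contributions yields the claimed recursion.

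The main obstacle is justifying that the reservation status of $T_{all}^j$ is (approximately) independent of the count of reservations in $[T_{all}^i, T_{all}^{j-1}]$, conditioned only on the current period being $T_{all}^i$. Strictly speaking, the active reservation probabilities $R_i^j$ across different $j$ are coupled through the shared history of each node's selection process described in Theorem~\ref{theorem1} and Theorem~\ref{theorem2}. The proof will adopt the same mean-field-style factorization already used implicitly in the model, treating the per-period reservation indicators as independent Bernoulli variables with marginal probabilities $R_i^\cdot$; I would note this assumption explicitly when writing out the decomposition, since without it the simple product form on the right-hand side would not hold. A brief inductive remark (the recursion is valid for all $j\ge i-1$ with base case $P_{rsv,k}^{(i,i-1),0}=1$ and $P_{rsv,k}^{(i,i-1),d}=0$ for $d\ge 1$, corresponding to an empty range) would complete the argument.
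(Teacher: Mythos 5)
Your proof is correct and follows essentially the same route as the paper's: conditioning on whether $T_{all}^j$ is reserved, applying the law of total probability, and anchoring the recursion at the empty-range base case $P_{rsv,k}^{(i,i-1),0}=1$. Your explicit acknowledgement of the independence (mean-field factorization) assumption behind the product form is a point the paper leaves implicit, but the underlying argument is the same.
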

\begin{proof}
The above probability corresponds to two cases.
In the first case, there is a total of $d-1$ reserved periods in the previous $j-1$ periods and the period $T_{all}^j$ is reserved by other nodes.
In the second case, there is a total of $d$ reserved periods in the previous $j-1$ periods and the period $T_{all}^j$ remains unreserved. 
The initial value of the cumulative formula is $P_{rsv,k}^{(i,i-1),0}=1$, that is, the number of reserved periods is $0$ when the total number of periods is $0$.
\end{proof}
\begin{theorem}

Given the reservation blocking quantity $d$ (defined in Definition \ref{definition5}). In the first $d$ periods of the right reservation window, if the probability of $j$ periods being occupied is expressed as $P_{rsv,k}^{(1,d),j}(j\in[0,d],k\in[1,m])$, then the reservation blocking probability is $P_{bk}^d=P_{rsv,k}^{(1,d),d}$.
\label{theorem3}
\end{theorem}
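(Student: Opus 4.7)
The plan is to proceed by reducing the statement to a direct application of Definition \ref{definition5} and Definition \ref{definition3}, with Lemma \ref{lemma1} supplying the actual recursive computation. By Definition \ref{definition5}, the reservation blocking event for blocking quantity $d$ is precisely the event that \emph{all} of the first $d$ periods of the right reservation window are already reserved by other nodes. By Definition \ref{definition3}, the quantity $P_{rsv,k}^{(1,d),j}$ is the probability that exactly $j$ of the periods $T_{all}^1,\ldots,T_{all}^d$ are reserved. Therefore, the event ``all $d$ are reserved'' corresponds to the special case $j=d$.

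First, I would formally match the notation: the ``first $d$ periods of the right reservation window'' for the tagged node $n_k$ are $T_{r}^1,\ldots,T_{r}^d$, which by Definition \ref{definition1} equal $T_{all}^{C_k}, \ldots, T_{all}^{C_k+d-1}$. Since $P_{rsv,k}^{(i,j),d}$ is indexed by the range $[i,j]$ and by the recursion in Lemma \ref{lemma1} depends only on the sequence of active reservation probabilities $R_i^\ell$ along that range, I would either re-index to write the target probability as $P_{rsv,k}^{(C_k,C_k+d-1),d}$, or (as the theorem's notation suggests) use the shifted form $P_{rsv,k}^{(1,d),d}$ with the understanding that the $R_1^j$ in the statement of Lemma \ref{lemma1} refer to the active reservation probabilities for the $j$-th period of the right reservation window.

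Next, I would invoke Lemma \ref{lemma1} with $i=1$ to unroll the recursion
\begin{equation*}
P_{rsv,k}^{(1,j),d}=P_{rsv,k}^{(1,j-1),d-1}R_1^{j}+P_{rsv,k}^{(1,j-1),d}(1-R_1^{j}),
\end{equation*}
using the initial condition $P_{rsv,k}^{(1,0),0}=1$ stated in the proof of Lemma \ref{lemma1}. Setting $j=d$ (and specializing the terminal branch where the ``unreserved'' case contributes nothing because we need all $d$ periods reserved), the recursion terminates at the product form consistent with ``all $d$ periods reserved''. This yields the desired identity $P_{bk}^d=P_{rsv,k}^{(1,d),d}$.

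The main obstacle, if any, is conceptual rather than computational: one has to justify that ``blocking'' in Definition \ref{definition5} is indeed the event whose probability is computed by the $j=d$ specialization of $P_{rsv,k}^{(1,d),j}$, i.e., that the $d$ periods referenced in the two definitions are the same $d$ periods and that the underlying reservations are made by the \emph{other} $m-1$ nodes (so that the tagged node $n_k$ itself is not counted among them). Once this correspondence is stated cleanly, the theorem is essentially an immediate corollary of Lemma \ref{lemma1}.
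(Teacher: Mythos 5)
Your proposal is correct and follows essentially the same route as the paper's own (one-sentence) proof: blocking per Definition \ref{definition5} is exactly the event that all of the first $d$ periods of the right reservation window are reserved by other nodes, and $P_{rsv,k}^{(1,d),d}$ is precisely the $j=d$ case of the range probability computed by the recursion of Lemma \ref{lemma1}. Your additional care with the $T_{r}^{j}$ versus $T_{all}^{j}$ re-indexing and the unrolled product form goes beyond what the paper writes down, but it is the same argument.
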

\begin{proof}
Based on Definition \ref{definition5} and Lemma \ref{lemma1}, the reservation blocking occurs when all $d$ periods in the beginning of the right reservation window are reserved by other nodes.
\end{proof}

\begin{figure}[!t]
\renewcommand{\algorithmicrequire}{\textbf{Input:}}
\renewcommand{\algorithmicensure}{\textbf{Output:}}
\begin{algorithm}[H]  
  \caption{Differentiated Reservation Parameter Model}  
  \label{alg:DifRsv}  
\begin{algorithmic}[1]
 \REQUIRE{$R_0^{k,j}=P_{has}^{rsv,k}/C_{max}$, $P_{rsv}^{(i,i-1),0}=1$ $(k\in[1,m],j\in[1,N_{max}+m-2])$}

\FOR{$s=1$ to $N_{max}+m-2$} 
        \FOR{$k=1$ to $m$}
            \FOR{$d=s+1$ to $s+C_k$}
                \STATE{$S_{s}^{d}=1-\sum_{z=1}^m{R_{s}^{z,d}}$}

                \STATE{$P_{k}(T_{all}^s,T_{all}^d)={S_{s}^{d}}/ \sum_{z=1}^{C_k} {S_{s}^{z}}$ }
                \STATE{$R_{s}^{k,d}=R_{s-1}^{k,d}+R_s^{k,s}P_{k}(T_{all}^s,T_{all}^d)$}
            \ENDFOR
        \ENDFOR
    \ENDFOR
    \FOR{$k=1$ to $m$}

        \FOR{$j=1$ to $C_k$}
            \STATE{$P_{RSV}^{k,j}=S_{cur}^{j}/\sum_{z=1}^{C_k}{S_{cur}^z}$}
\STATE{$b_j=\sum_{i=j}^{C_k}{P_{RSV}^{i}}/\sum_{i=1}^{C_k}{\sum_{z=i}^{C_k}{P_{RSV}^z}}$}        
\ENDFOR
        
        \STATE{$P_{IR}^{k}=b_1, P_{no}^{rsv,k}=b_0$}
    \ENDFOR
\STATE{$P_{RSV}^{j}=\sum_{k=1}^m{P_{RSV}^{k,j}}$, $P_{IR}^{tot}=\sum_{k=1}^m{P_{IR}^{k}}$}
\ENSURE{$P_{RSV}^{j}$, $b_j$, $P_{no}^{rsv,k}$ and $P_{IR}^{tot}$}
\end{algorithmic}
\end{algorithm}
\end{figure}

Let $P_{has}^{rsv,k}$ be the probability of node $n_k(k\in[1,m])$ in the in-reservation state, and $R_t^{k,j}(t\in[1,2C_k-1],j\in[t+1,t+C_k])$ be the probability that node $n_k$ has reservation for $T_{all}^j$ at the current period $T_{all}^t$.	Since the reservation period selection is memoryless, an approximately equal initial active reservation probability is assigned to each period in the left reservation window. Therefore, we have
\begin{equation}	
\label{R000}
R_0^{k,j}=P_{has}^{rsv,k}/C_k.
\end{equation}

Suppose that at period $T_{all}^t(t\in[1,2C_k-1])$, node $n_k(k\in[1,m])$ has a probability of $P_k(T_{all}^t,T_{all}^j)$ for selecting period $T_{all}^j(j\in[t+1,t+C_k])$ as the reservation period. This selection increases the probability of period $T_{all}^j$ in the active reservation state. The increased probability is obtained by multiplying the probability of period $T_{all}^t$ in the active reservation state with selecting period $T_{all}^j$. Consequently, we have
\begin{equation}	
\label{leijiagongshi}
R_{t+1}^{k,j}=R_t^{k,j}+R_t^{k,t}\times P_t^k(T_{all}^t,T_{all}^j).
\end{equation}

By using Lemma \ref{lemma1} and Theorem \ref{Theo1}, combining with the Eq. (\ref{leijiagongshi}) sequentially,  we present the overall calculation process of the model, as shown in Algorithm \ref{alg:DifRsv}. It sequentially calculates the probabilities of each period in the right reservation window being reserved as the timeline progresses from the beginning to the end of the left reservation window. Then the probability of each period in the right reservation window being reserved can be obtained.
Therefore, the probability of each transmission period being reserved can be obtained. 

\subsection{Backoff Model}\label{Backoff Model}
For the RAN period, there are three types of slots: idle slots, relative reservation slots and CSMA transmission slots. This section introduces the ratio of these slots and the probability of successful transmission through relative reservation and CSMA. Moreover, we give some formulas in this section, which can be used to calculate the performance of CSMA nodes and REL nodes using CSMA in the non-reservation status.

According to the reservation model introduced in Section \ref{DifRsvModel} and the reservation window sizes used by each node, we can obtain the probability that a given period has been relatively reserved by node $n_k(k\in[1,m_{rel}])$, denoted as $P_{rel}^k$. The total probability that a period has been relatively reserved by any node is $P_{rel}^{tot}=\sum_{k=1}^{m_{rel}}{P_{rel}^k}$.


For CSMA transmission opportunities with the contention window $CW_i$, where $i$ represent the backoff stage, the node randomly choose a backoff counter value from $CW_i$ future unreserved slots. The range of selectable backoff counter values in the $i$-th backoff stage is $[0,W_i^{'}]$, where $W_i^{'}=W_i+m_{rel}-1$ is the revised contention window. According to Theorem \ref{theorem1} and the probability that the slot has been reserved within the available range, we can obtain the probability of selecting the backoff value $j$ in the $i$-th backoff stage, denoted as $P_{bo}^{i,j}$, where $j\in[0,W_i^{'}]$.
\begin{figure}[htbp]
\centering
\includegraphics[width=3in]{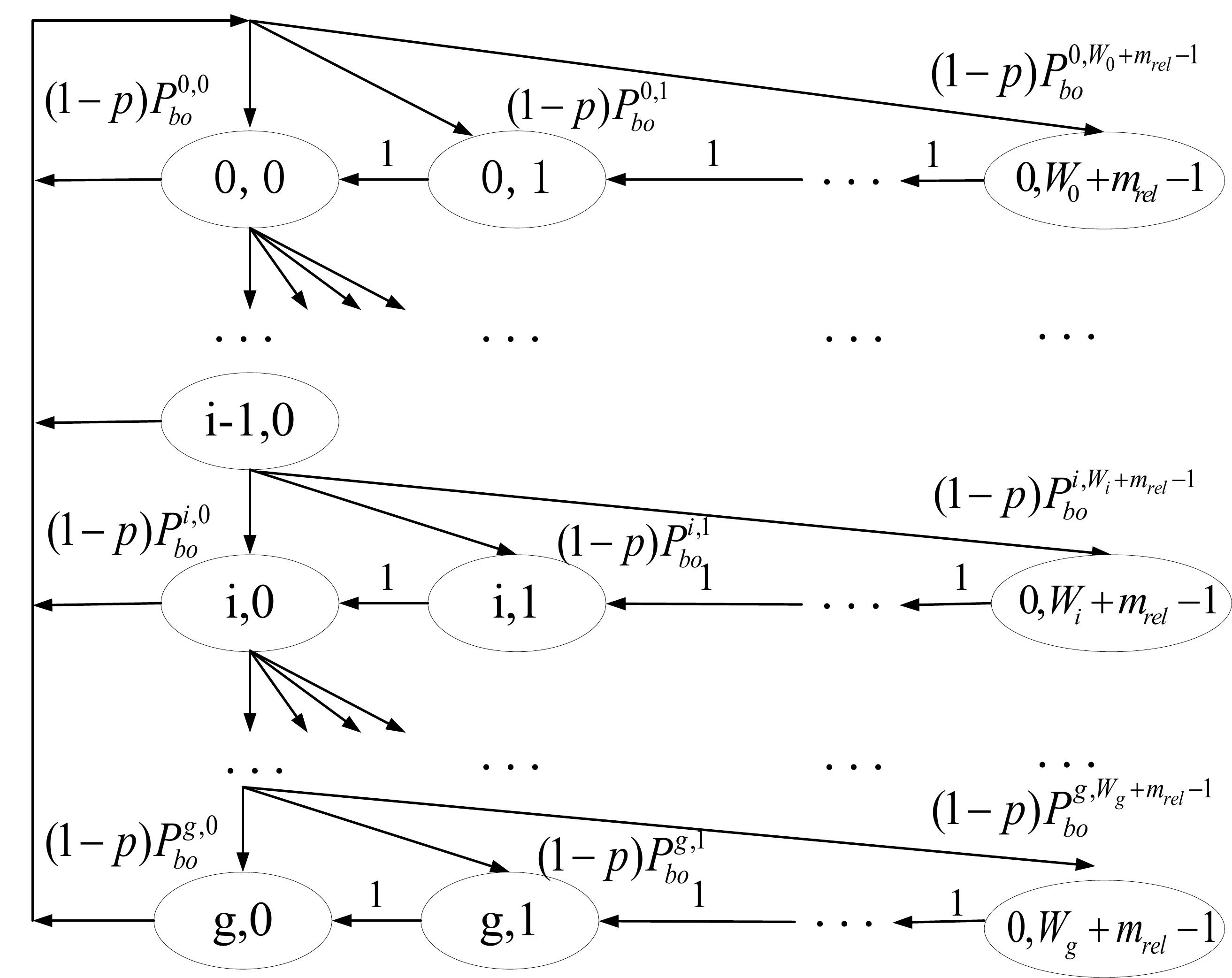}
    \caption{Corrected Markov chain model for the backoff value}
	\label{fig:bianqi}
\end{figure}



The subsequent calculation is based on the classic model proposed by \cite{Paperbianqi}. Under the condition that some slots are occupied by reservations, the state transition diagram of backoff value has been modified according to the relative reservation rules, which is as shown in Fig. \ref{fig:bianqi}, where $p$ denotes the probability of failed CSMA transmission, and $g$ signifies the maximum backoff stage. In this model, $s(t)$ represents the stochastic process of the backoff stage and $b(t)$ represents the value of backoff counter. Consequently, the one-step transition probability of the corresponding Markov chain can be obtained as
	\begin{eqnarray}
\label{yibuzhuanyi}
		\left\{
		\begin{aligned}
			&P \left\{ i,k \vert i,k+1 \right\} =1  ,& k\in (0,W_i^{'}),\ &i\in(0,g) \\
			&P \left\{ 0,k \vert i,0 \right\} =(1-p)P_{bo}^{0,k} ,& k\in (0,W_i^{'}],\ &i\in(0,g)\\
			&P \left\{ i,k \vert i-1,0 \right\} =pP_{bo}^{i,k} ,& k\in (0,W_i^{'}],\ &i\in(1,g)\\
			&P \left\{ i,k \vert i,0 \right\} =pP_{bo}^{i,k},&  k\in (0,W_i^{'}],\ &i=g\\
		\end{aligned}
		\right.
.
	\end{eqnarray}

Let $h_{i,k}= \lim\limits_{t\to+\infty}P\left\{s(t)=i,b(t)=k\right\} $. We refer to Eq. (\ref{bianqi-1}) and Eq. (\ref{bianqi-2}) in \cite{Paperbianqi}:
	\begin{eqnarray}
\label{bianqi-1}
		h_{i,0}=\left\{
		\begin{aligned}
			&p^ih_{0,0}  ,& i\in(0,g) \\
			&p^ih_{0,0}/(1-p) ,& i=g
		\end{aligned}
		\right. ,
	\end{eqnarray}
and
\begin{equation}
 \tau = h_{0,0}/(1-p)
\label{bianqi-2}	
,
\end{equation}
where $\tau$ represents the probability of transmission at each slot for a tagged node.

The nodes that access the channel using CSMA consist of CSMA nodes and REL nodes that have not obtained reservation transmission opportunities. Let us define these nodes as competing nodes and represent their average count as $m_{con}$. Since the average probability of all the REL nodes in the non-reservation status is $P_{no}^{rsv}$, then the expected count
of competing nodes is
\begin{equation}	
	\label{Emcount}
E[m_{con}]=m_{cs}+m_{rel}P_{no}^{rsv}
.
\end{equation}

The collision occurs when multiple nodes initiate transmission via CSMA in an unreserved slot. The probability of it is denoted by $\tau^{'}=\tau/(1-P_{rel}^{tot})$. Thus, we have
\begin{equation}	
	\label{cacPc}
P_c=1-(1-\frac{\tau}{1-P_{rel}^{tot}})^{m_{con}-1},
\end{equation}
\begin{equation}	
p=1-(1-P_c)(1-P_e).
\end{equation}

Let $P_{tr}$ denote the probability that at least one node tries to initiate transmission through CSMA in an unreserved slot, and $P_s$ represents the probability of a successful CSMA transmission. The success condition is that only one node transmits in this slot, under the condition that at least one node transmits. Therefore, the following formula holds:
\begin{equation}	
P_{tr}=1-(1-\tau/(1-P_{rel}^{tot}))^{m_{con}},
\end{equation}
\begin{equation}	
P_s=\frac{m_{con}\tau}{(1-P_{rel}^{tot})}\cdot\frac{(1-\tau/(1-P_{rel}^{tot}))^{m_{con}-1}}{P_{tr}}.
\end{equation}

When it is in the steady state, the following equation is obtained according to the Eq. (\ref{yibuzhuanyi}):
\begin{equation}	
h_{i,k}=\sum_{j=k}^{W^{'}_i}{P_{bo}^{i,j}}h_{i,0},
\end{equation}
where $k\in(0,W_i-1)$.

According to the normalized condition, we have
\begin{equation}
	\label{CsmaAll1}	
\sum_{i=0}^mp^ih_{0,0}\sum_{k=0}^{W_i^{'}}{\sum_{j=k}^{W_i^{'}}{P_{bo}^{i,j}}}=1,
\end{equation}
where $P_{bo}^{i,j}$ is a constant obtained by the model introduced in Section \ref{DifRsvModel}. By combining Eq. (\ref{cacPc}--\ref{CsmaAll1}), we have
\begin{equation}	
\tau=\frac{1}{(1-P_{rel}^{tot})(\sum_{i=0}^{g-1}{p^i}+\frac{p^g}{(1-p)})(\sum_{k=0}^{W_i^{'}}{ \sum_{j=k}^{W_i^{'}}{P_{bo}^{i,j}}  })}.
\end{equation}

The formulas in this section are used to calculate the performance of CSMA nodes and REL nodes using CSMA in the non-reservation status.
\subsection{Calculation of RAN Period Transmission Probability}\label{RanPeriodTrans}
For the REL node, the transmission process includes the reservation transmission process at the in-reservation status and the CSMA transmission process at the non-reservation status.

In the RAN period, the probability of a collision-free transmission process initiated by the REL node is denoted as $P_{ran}^{rel}$. Let $P_{rel}^{rsv}$ be the probability of transmission initiation through the reservation, and $P_{rel}^{cs}$ be the probability of successful transmission initiated through CSMA for the REL node. In the relative reserved slot, the relative reservation transmission process will certainly occur. In the unreserved slot, there is a probability $P_{tr}$ for the occurrence of a CSMA transmission process, which can be initiated by a REL node at the non-reservation status or a CSMA node. Therefore, the proportion of transmission initiation by reservation is obtained by
\begin{equation}	
P_{rel}^{rsv}=\frac{P_{rel}^{tot}}{P_{rel}^{tot}+(1-P_{rel}^{tot})P_{tr}}.
\label{PrelRsv}
\end{equation}

For the contention nodes, the ratio of REL nodes is 
\begin{equation}	
P_{con}^{rel}=\frac{m_{rel}P_{no}^{rsv}}{m_{con}}.
\end{equation}

Therefore, for a given transmission, the probability that it is a successful transmission process initiated by the REL node through CSMA can be obtained by
\begin{equation}	
P_{rel}^{cs}=(1-P_{rel}^{rsv})P_sP_{con}^{rel}.
\end{equation}

Then for a given transmission in the RAN period, the probability that it is a successful transmission initiated by the REL node is
\begin{equation}	
P_{ran}^{rel,suc}=(P_{rel}^{rsv}+P_{no}^{rsv}P_{rel}^{cs})(1-P_e).
\end{equation}

The average transmission duration of REL nodes and CSMA nodes is as follows

\begin{align}
	\left\{
	\begin{aligned}
&T_{ran}^{cs}=u_{bo}\sigma+(1-P_s(1-P_e))T_{fai}^{cs}+P_s(1-P_e)T_{suc}^{cs}\\
&T_{ran}^{rel}=P_{ran}^{rel,suc}T_{suc}^{rsv}+(P_{no}^{rsv}P_{rel}^{cs})(1-P_s(1-P_e))T_{fai}^{rsv}\\
	\end{aligned}
	\right.
,
\end{align}
where $\sigma$ represents the duration of a slot.
Therefore, the average transmission duration for RAN is obtained by
\begin{equation}	
T_{ran}^{tot}=P_{ran}^{rel}T_{ran}^{rel}+P_{rel}^{cs}T_{ran}^{cs}.
\end{equation}

For a given REL node, the probability of successful transmission in each slot through CSMA is obtained by
\begin{equation}	
P_{cs}^{suc}=\frac{P_{tr}P_sh_r}{m_{con}}.
\end{equation}



Let $P_{ran}$ denote the ratio of the RAN period, and $P_{ST}^{rel,k}$ denote the probability of a successful transmission initiated by node $n_k$ in a certain transmission. If node $n_k$ is a REL node, the probability of successful transmission initiated by the node is obtained by
\begin{equation}	
P_{ST}^{rel,k}=P_{ran}P_{ran}^{rel,k}(1-P_e).
\end{equation}

The probability of a successful transmission for a tagged REL node $n_k$ using CS A is obtained by
\begin{equation}	
P_{ST}^{cs,k}=P_{ran}(1-P_{rel}^{rsv})P_{tr}P_s(1-P_e)/m_{con}.
\end{equation}

For a given slot in the RAN period, the probability that it is busy is denoted as $P_{ran}^{bs}$ and it can be calculated by 
\begin{equation}	
P_{ran}^{bs}=P_{rel}^{tot}+(1-P_{rel}^{tot})P_{tr}.
\end{equation}

Given the reserved backoff value selected by a node is $s$. The probability of $z$ busy slots being occupied by other nodes before the backoff value changes to $0$ is denoted as $P_{bet}^{s,z}$, and the corresponding formula is expressed as
\begin{equation}	
P_{bet}^{s,z}=C_s^z(P_{ran}^{bs})^z(1-P_{ran}^{bs})^{s-z},
\end{equation}
where $z\in[0,s]$.

Therefore, the probability of other nodes initiating $z$ transmissions between two reservation transmissions of node $n_k$ can be obtained by
\begin{equation}	
\label{PBETZ}
P_{bet}^{z}=\sum_{s=z}^{C_k}P_{RSV}^s.
\end{equation}

 Subsequently, the corresponding probability density function of the delay between two reservation transmissions can be computed based on Eq. (\ref{PBETZ}).	

\subsection{Calculation of ABS Period Transmission Probability}\label{AbsTransCac}
This section aims to calculate the successful transmission probability in the ABS period.


 Firstly, we can obtain the absolute reservation transmission probability $P_{abs}^k$ in the ABS period of a given node $n_k(k\in[1,m])$ and the total ratio of reservations during the ABS period $P_{abs}^{tot}$ according to the model described in Section \ref{DifRsvModel}.
The absolute reservation ratio directly corresponds to the ratio of reservation transmission processes. Therefore, the probability of successful transmission of the RSV MPDU, initiated by CSMA, at the non-reservation status, in each period for a node at the non-reservation status can be obtained by	
\begin{equation}	
	\label{AbsCsmaSuc}
P_{cs}^{suc}=\frac{(1-P_{abs}^{tot})P_sh_r}{m_{abs}}.
\end{equation}

When considering the un-ideal channel, we need to iteratively input the above parameters into the model described in Section \ref{DifRsvModel} to recalculate the reservation transmission ratio.

During the ABS periods, the nodes can transmit via absolute reservation or CSMA. Therefore, the successful transmission probability for the node $n_k$ can be expressed as follows:
\begin{equation}	
P_{ST}^{abs,k}=(P_{abs}^k+(1-P_{abs}^{tot})P_s/m_{abs})(1-P_e).
\end{equation}

Since the CSMA transmission initiated in the unreserved period in the ABS period adopts the CSMA mechanism, $P_s$ can be directly calculated using the model established in \cite{Paperbianqi}.






The average duration of ABS period is represented by $T_{abs}^{tot}$, where the average transmission duration initiated through the absolute reservation mechanism $T_{abs}^{rsv}$ and the CSMA mechanism $T_{abs}^{cs}$. Then the following equations hold: 

\begin{align}
	\left\{
	\begin{aligned}
	&T_{abs}^{rsv}=(1-P_e)T_{suc}^{rsv}+P_eT_{fai}^{rsv} \\
	&T_{abs}^{cs}=(1-P_e)P_sT_{suc}^{cs}+(1-(1-P_e)P_s)T_{fai}^{cs}+u_{bo}\sigma\\
	&T_{abs}^{tot}=P_{abs}^{tot}T_{abs}^{rsv}+(1-P_{abs}^{tot})T_{abs}^{cs}
	\end{aligned}
	\right.
,
\end{align}
where $u_{bo}$ represents the average number of backoff slots per CSMA transmission initiation.

\subsection{Calculation of Throughput and Delay}\label{CacThrouDelay}
Let $P_{ST}^{k}$ be the probability that the node $n_k$ successfully transmits in a given period. For the node using the absolute reservation, relative reservation and CSMA, it equals to $P_{ST}^{abs,k}$, $P_{ST}^{rel,k}$ and $P_{ST}^{cs,k}$ respectively.

The average transmission duration of various transmission opportunities is
\begin{equation}	
T_{all}^{tot}=P_{ran}T_{ran}^{tot}+(1-P_{ran}T_{abs}^{tot}).
\end{equation}

For a given node $n_k$, the throughput is
\begin{equation}
H^k=P_{ST}^kE[P]/T_{all}^{tot},
\end{equation}
where $E[P]$ represents the average size of payload.

The average contention delay is
\begin{equation}
T_{d}^k=T_{all}^{tot}/P_{ST}^k .
\end{equation}

\section{Simulation Result}\label{SIMULATION RESULT}
In this section, we will analyze the simulation result in three parts. Section \ref{Simulation parameter description} shows the configured parameters. Section \ref{Model fit verification} validates the theoretical analysis with the simulation. Section \ref{PERFORMANCE EVALUATION} assesses the proposed scheme against the alternative reservation mechanisms.
\subsection{Simulation Parameter Description}\label{Simulation parameter description}
The simulation platform utilized in this study is based on a simulation platform \cite{simu1} using NS-3, where the proposed DSGARM reservation mechanism is implemented. Unless stated otherwise, the parameters are set according to the specifications provided in Table \ref{tab1}. 

The Real-time Traffic Weighted Average Delay Utility (WADU) is represented by $U_w$, and its calculation formula is defined as follows:
\begin{equation}
U_w\triangleq\sum_{i=0}^{k_{low}-1}{\alpha_iU_i/\sum_{i=0}^{k_{low}-1}{\alpha_i}}.
\end{equation}

The utility weighted throughput (UWT) is defined as $H_w$, which weights the throughput through the weight of each traffic and $U_w$, which is obtained by
\begin{equation}
H_w\triangleq\sum_{i=0}^{k_{tot}-1}{\alpha_iU_iH_i}.
\end{equation}
where the $H_i$ represents the throughput of traffic $E_i$.

\begin{table}
\centering
\caption{Simulation Setup}
\begin{tabular}{cc|cc} 
\toprule 
Parameter & Value & Parameter & Value \\ 
\midrule 
Channel capacity & 21Mbps & $\alpha_0$ & 1.48 \\
Package size & 1500Bytes &$\alpha_1$ & 1.16 \\
$\varepsilon_{abs}$ & 0.42 &$\alpha_2$ & 1 \\
$\varepsilon_{res}$ & 0.514 &$\alpha_3$ &0.84 \\
$L_0^{min}$ & 34ms &$\alpha_4$ &0.72 \\
$L_0^{max}$ & 68ms &$\alpha_5$ &0.58 \\
$L_1^{min}$ & 82ms &$\lambda$ &0.85 \\
$L_1^{max}$ & 134ms &$P_{RSV}^{sui}$ & 40 \\
\bottomrule 
\end{tabular}
\label{tab1}
\end{table}

\subsection{Validation of Theoretical Analysis}\label{Model fit verification}

This section focuses on validating the precision of the analytical model. To achieve this, five groups of simulations are conducted.

Simulation scenario \uppercase\expandafter{\romannumeral1} is configured with different reservation window sizes and a total of $20$ nodes, and the proportion of ABS nodes, REL nodes, and CSMA nodes is $0:1:1$. 

In simulation scenario \uppercase\expandafter{\romannumeral2}, the proportion of the three types of nodes is $3:3:4$, and the average reservation window sizes for the absolute reservation mechanism and the relative reservation mechanism are $40$ and $80$ respectively. In this simulation, the number of nodes change from $10$ to $70$.

Simulation scenario \uppercase\expandafter{\romannumeral3} is configured with varying channel error probability, the proportion of the three types of nodes is $2:3:5$, and the reservation window size for the absolute reservation mechanism and the relative reservation mechanism are $30$ and $60$, respectively.

Simulation scenario \uppercase\expandafter{\romannumeral4} comprises $50$ nodes, where each node randomly configures its traffic type and reservation parameters. Four groups of parameters configuration are conducted, and the performance of each node in each is recorded.


To verify the probability density function of delay, simulation scenario \uppercase\expandafter{\romannumeral5} is established for three types of nodes, which are CSMA nodes, ABS nodes and REL nodes. Each simulation contains one type of nodes with the total number of $40$.

\begin{figure*}
\subfloat[Scenario \uppercase\expandafter{\romannumeral1} Reservation Ratio]{\includegraphics[width=0.33\textwidth]{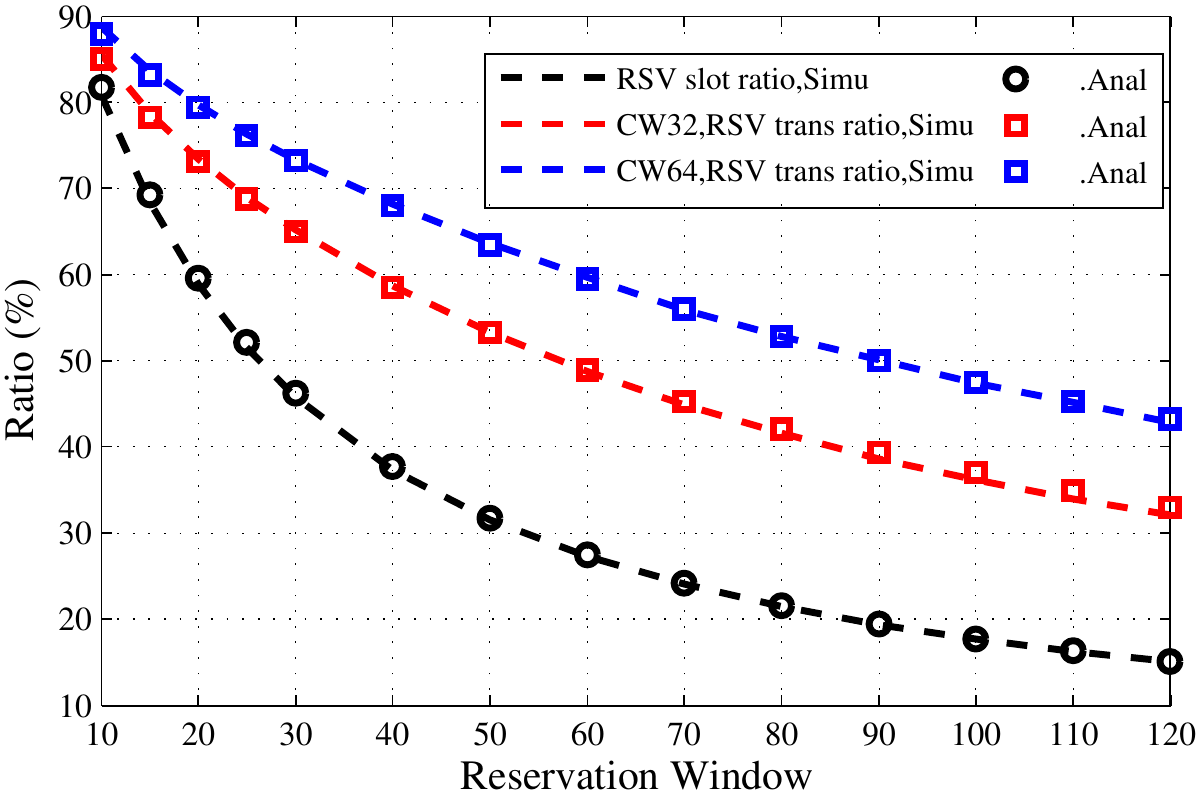}\label{figAnalysis: sub_figure2}}
\subfloat[Scenario \uppercase\expandafter{\romannumeral2} Throughput]{\includegraphics[width=0.33\textwidth]{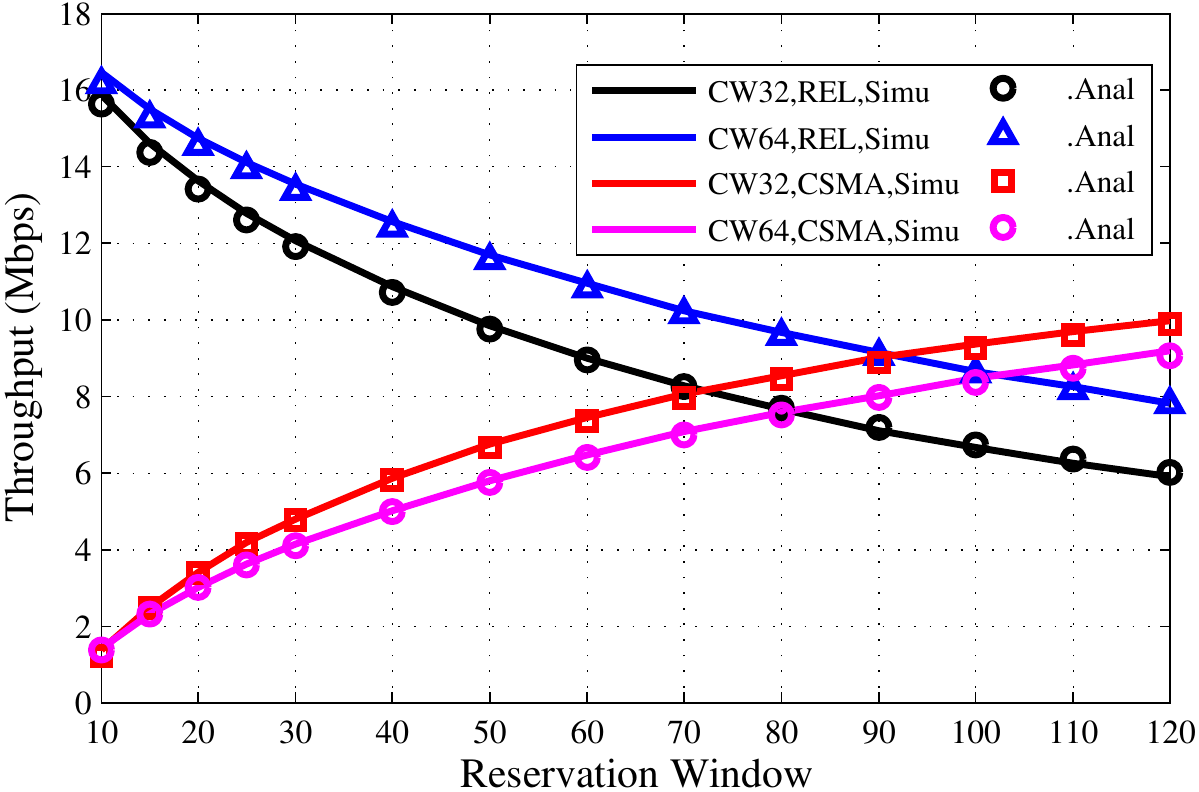}\label{figAnalysis: sub_figure3}}
\subfloat[Scenario \uppercase\expandafter{\romannumeral2} Contention Delay]{\includegraphics[width=0.33\textwidth]{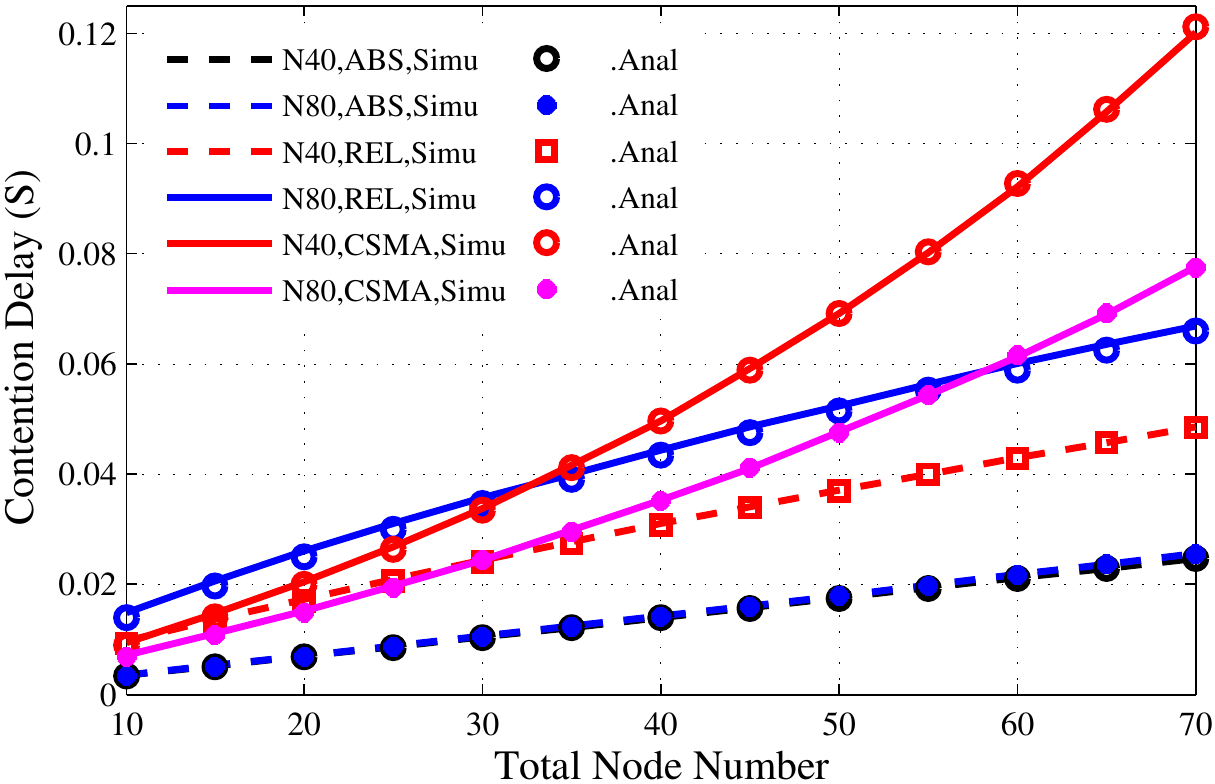}\label{figAnalysis: sub_figure5}}\\
\subfloat[Scenario \uppercase\expandafter{\romannumeral3} Reservation Ratio]{\includegraphics[width=0.33\textwidth]{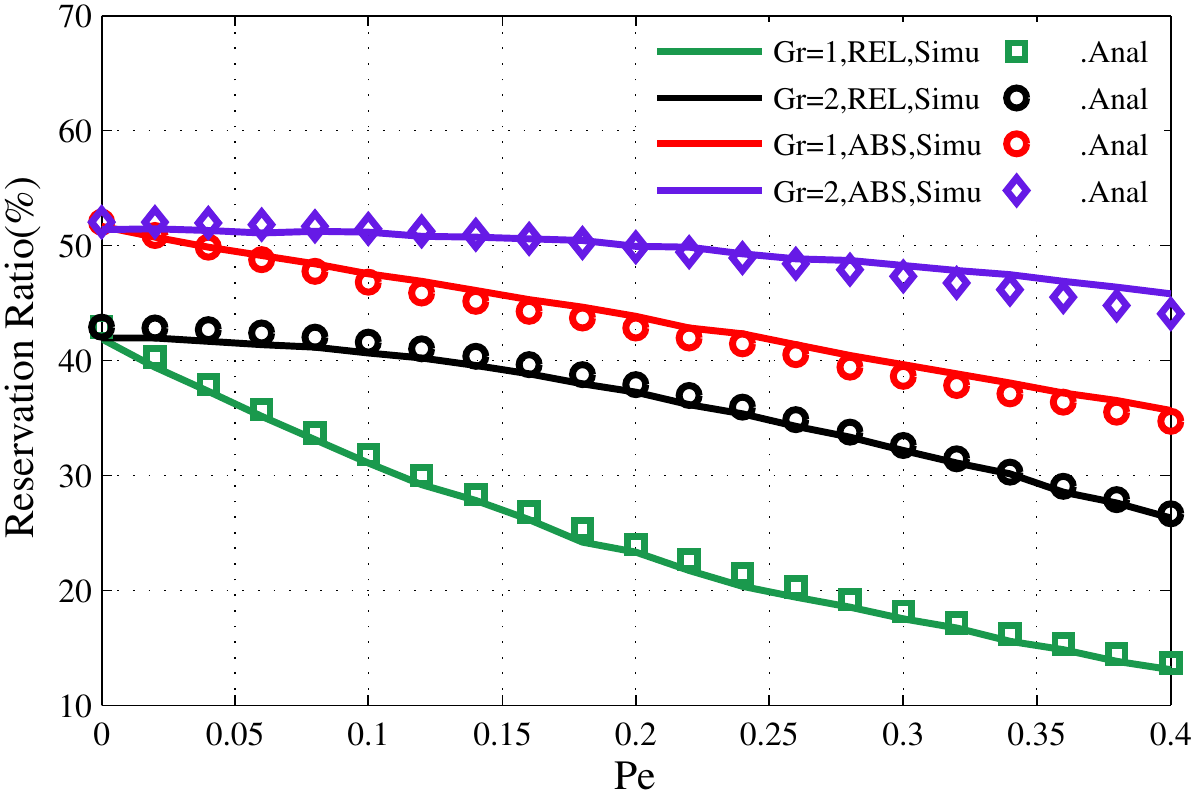}\label{figAnalysis: sub_figure9}}
\subfloat[Scenario \uppercase\expandafter{\romannumeral4} Throughput]{\includegraphics[width=0.33\textwidth]{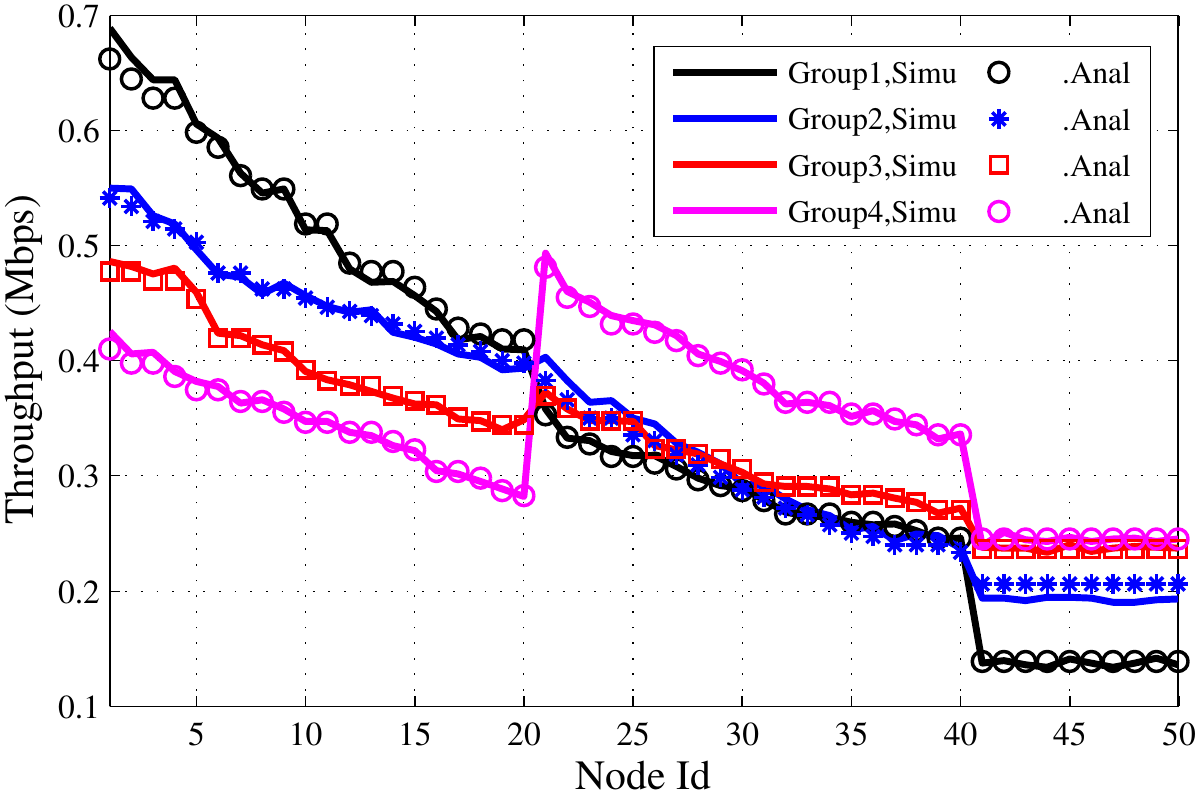}\label{figAnalysis: sub_figure7}}
\subfloat[Scenario \uppercase\expandafter{\romannumeral5} Probability Density Function]{\includegraphics[width=0.33\textwidth]{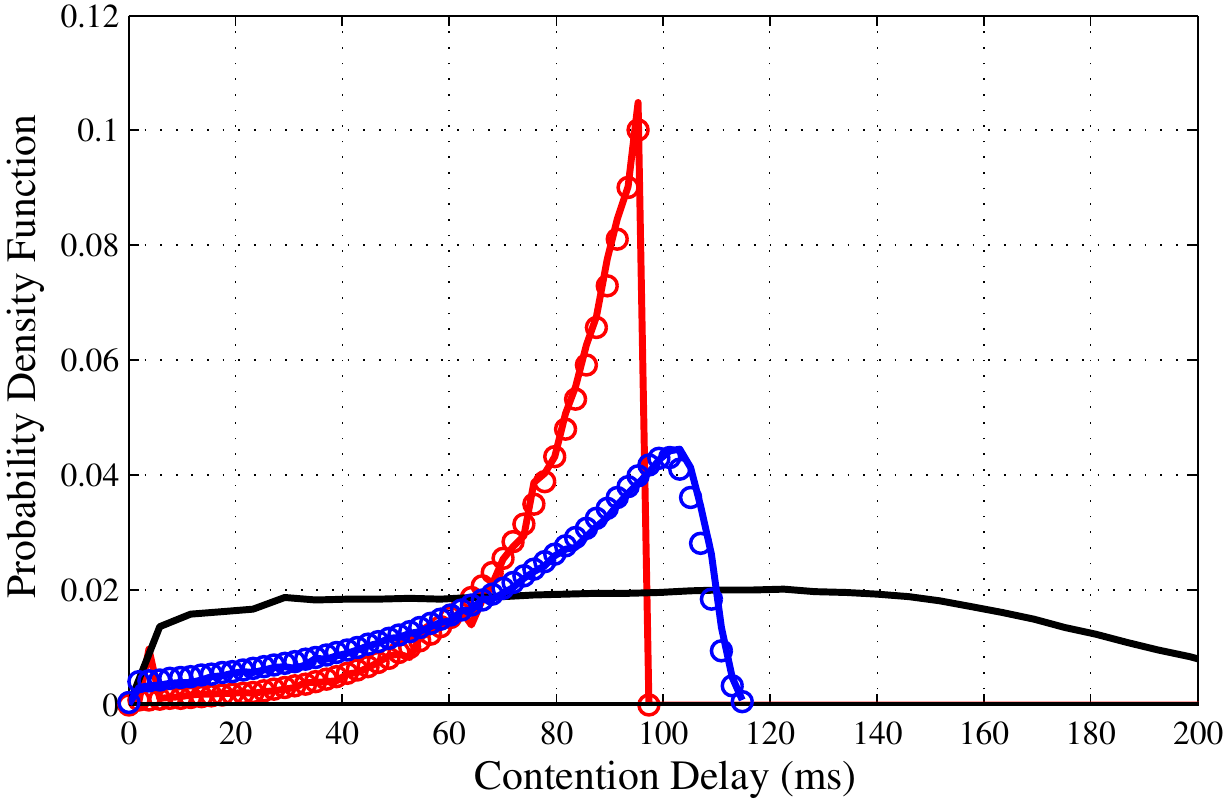}\label{figRelPdf}}

\caption{Analysis and simulation results}
\label{fig_simu_res}
\end{figure*}


The simulation performance of throughput, delay and reservation ratio in the above scenarios is consistent with the results of the analytical model.
In Fig. \ref{figAnalysis: sub_figure2}, 'RSV slot ratio' represents the proportion of slots occupied by relative reservations of all slots including the idle and CSMA slots. 'RSV trans ratio' represents the proportion of transmission occupied by relative reservations.
Fig. \ref{figRelPdf} exhibits the probability density curves of the contention delay observed with the three mechanisms. When employing the relative and absolute reservation mechanisms, the distribution of contention delay is primarily concentrated around the average delay, indicating a diminished probability of encountering extremely short or long transmission delay. The findings indicate that the reservation mechanism offers a reduced average delay and enhanced fairness in comparison to the CSMA mechanism.

\subsection{Performance Evaluation}\label{PERFORMANCE EVALUATION}
In this section, we conduct simulations to compare with the 802.11ax protocol and two reservation mechanisms, which is SCRP mechanism \cite{juedui2} and DHM-MRR mechanism \cite{juedui4}. The proposed DSGARM protocol is configured with three reservation types: pure absolute reservation, pure relative reservation, and hybrid reservation configurations. The SCRP reservation mechanism is a flexible approach that ensures real-time traffic support. On the other hand, the DHM-MRR protocol is a differentiated reservation protocol that determines the reservation period based on delay requirements. The 802.11ax protocol \cite{simu4} utilizes the EDCA mechanism.

To compare the three reservation mechanisms fairly, we maintain a consistent proportion of reservation resources for real-time traffic in the simulation of the three mechanisms. 

\subsubsection{Performance Comparison of Varying Node Number}\label{Performance comparison of variable node number}
\ 
\newline 
\indent In the simulation of this section, the traffic ratios from $E_0$ to $E_5$ are configured to be $10\%$, $20\%$, $10\%$, $10\%$, $20\%$ and $30\%$ respectively, where $E_0$ and $E_1$ belong to the real-time traffic. The simulation results with varying numbers of nodes are shown in Fig. \ref{fig Performance comparison of variable node number}. Due to the large delay of 802.11ax when there are more nodes, logarithmic processing is applied to the results of delay.

\begin{figure*}[htb]

\captionsetup{position=bottom}
\subfloat[Ordinary Traffic Throughput]{\includegraphics[width=0.33\textwidth]{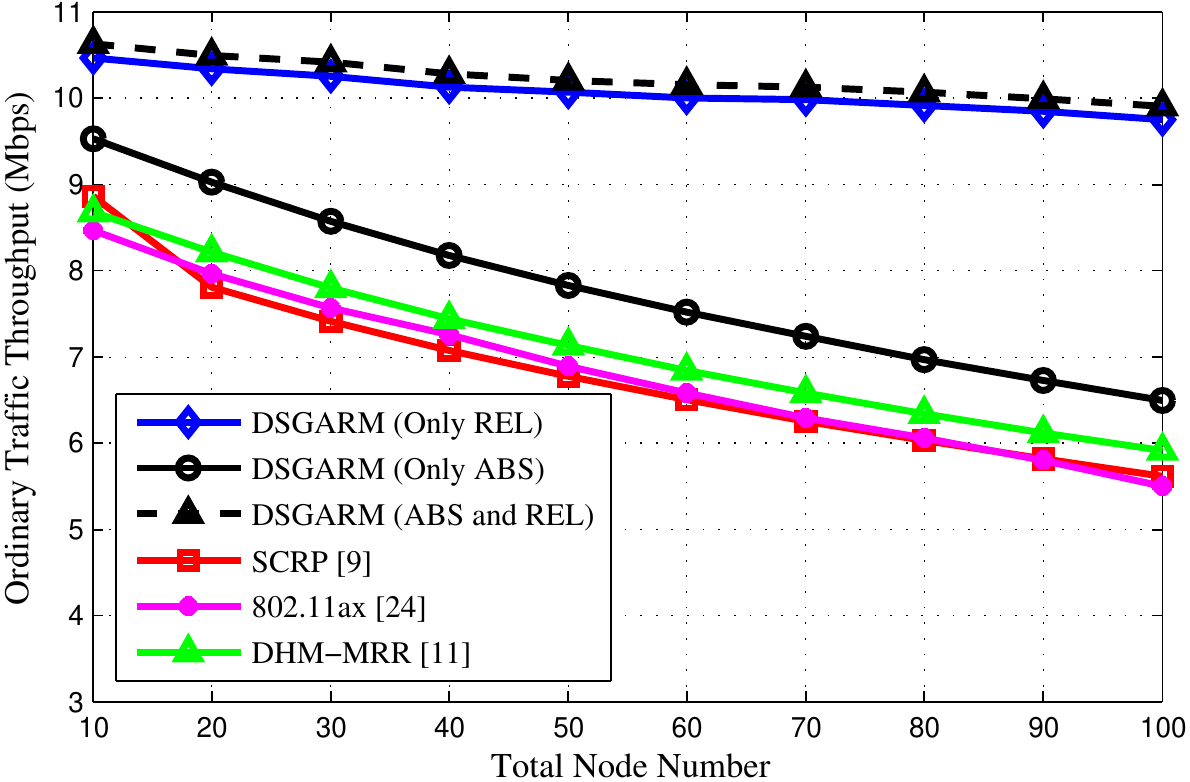}\label{figChanNode: sub_figure2}}
\subfloat[Utility Weighted Throughput]{\includegraphics[width=0.33\textwidth]{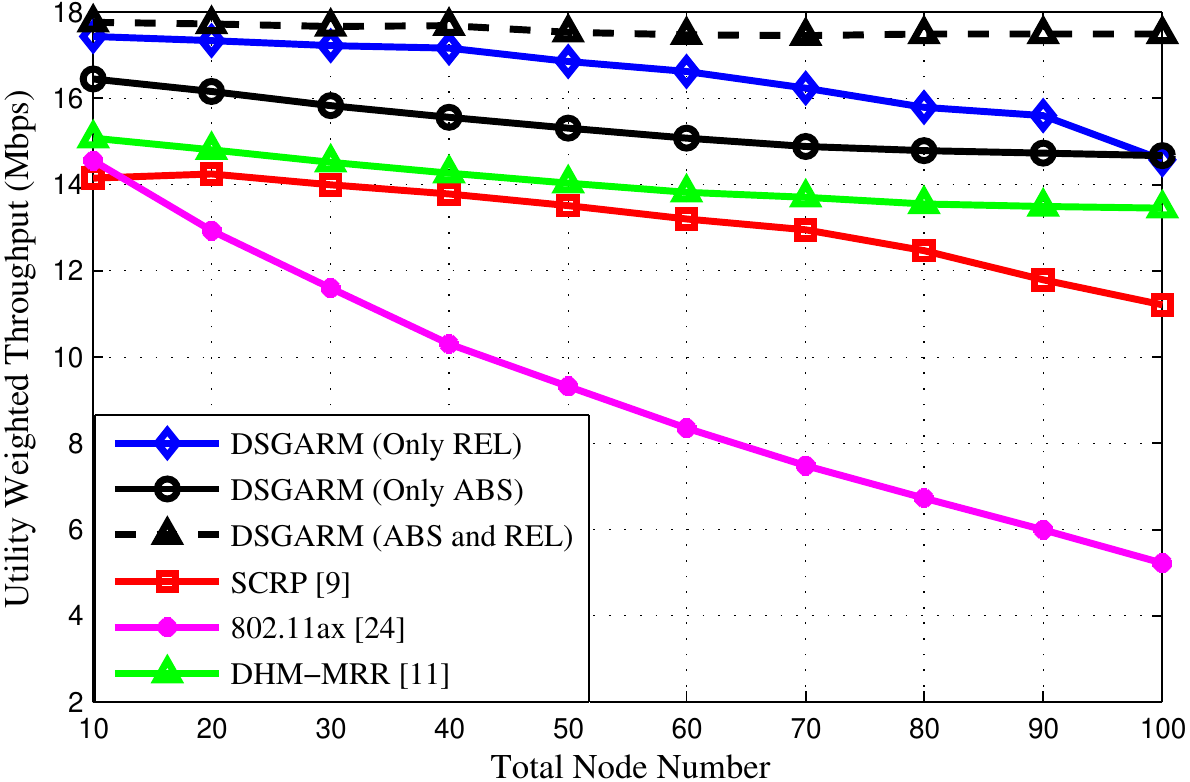}\label{figChanNode: sub_figure3}}
\subfloat[Weighted Average Delay Utility]{\includegraphics[width=0.33\textwidth]{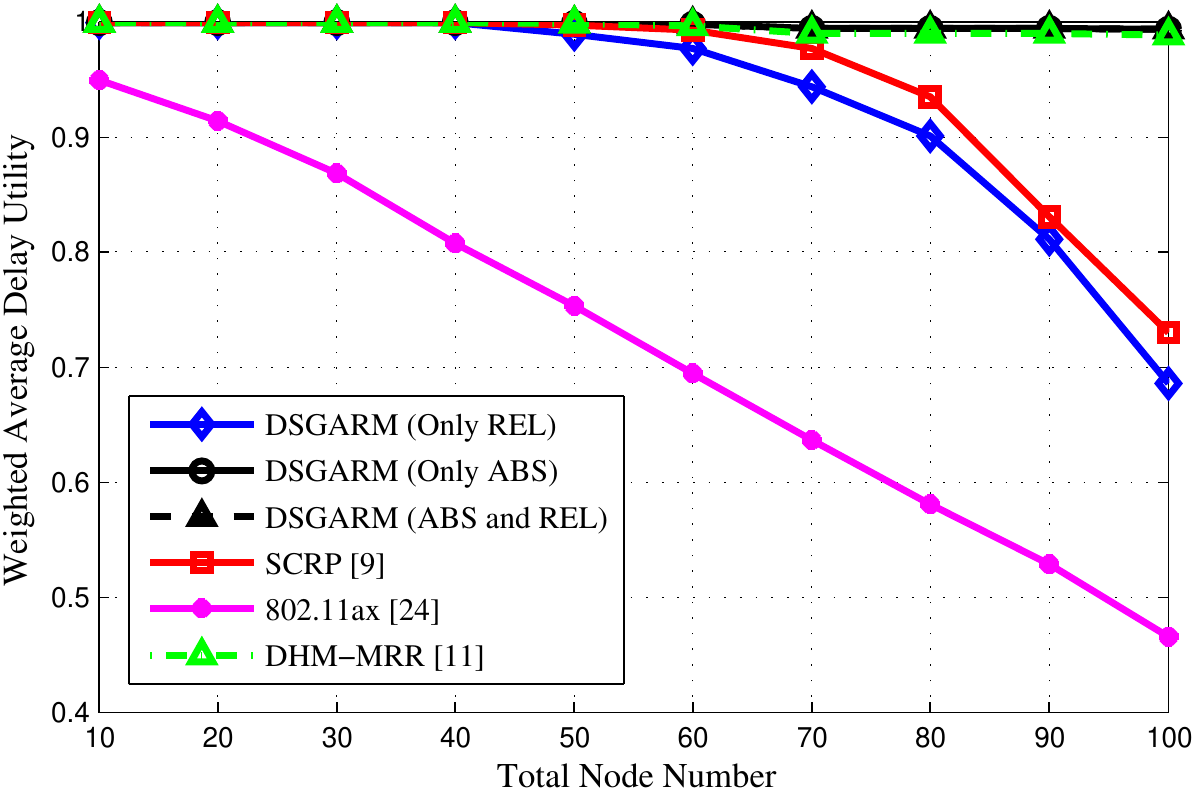}\label{figChanNode: sub_figure4}}\\
\subfloat[Real-time traffic delay]{\includegraphics[width=0.33\textwidth]{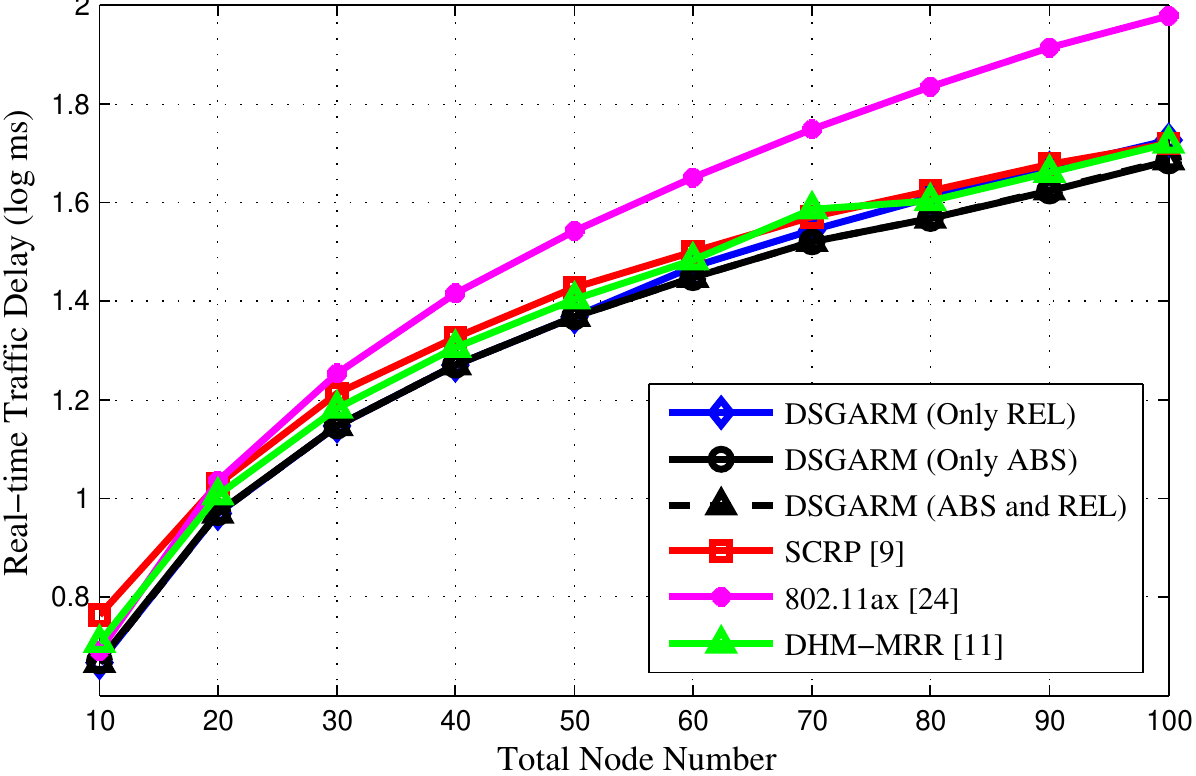}\label{figChanNode: sub_figure1}}
\subfloat[Weighted Average Delay Utility]{\includegraphics[width=0.33\textwidth]{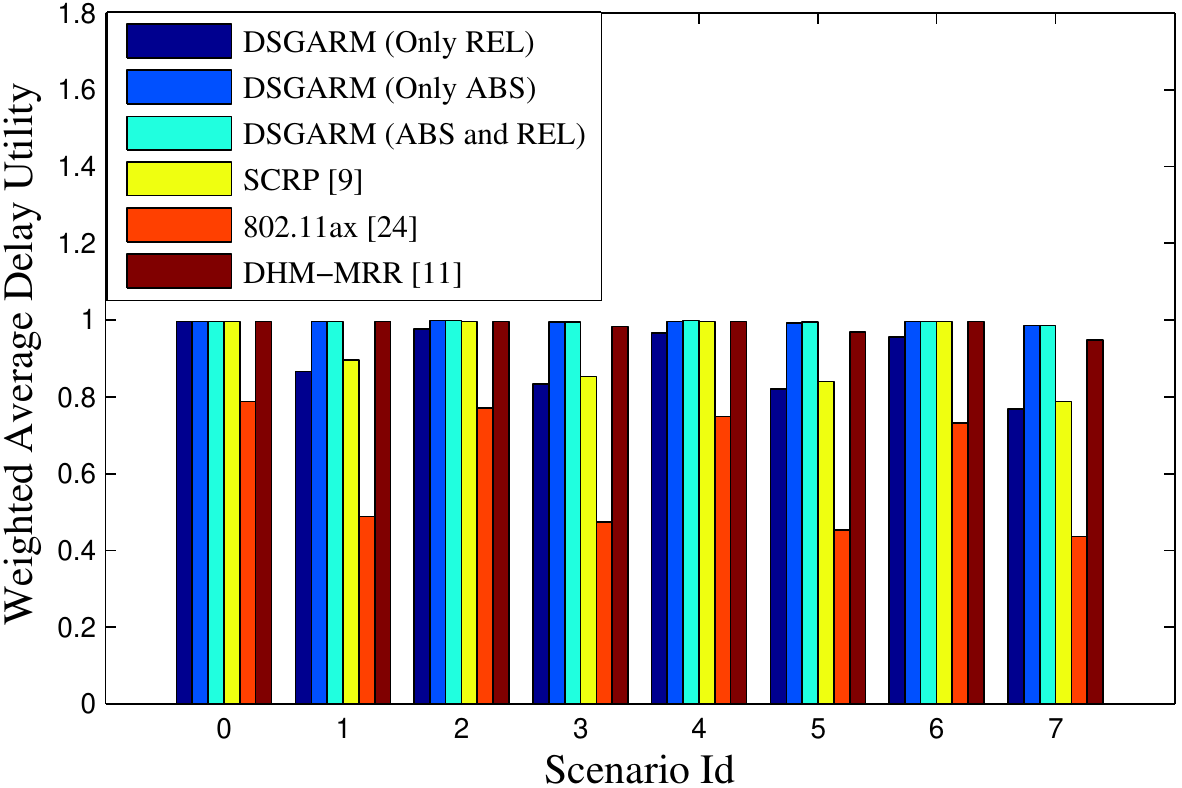}\label{figChanTra: sub_figure1}}
\subfloat[Utility Weighted Throughput]{\includegraphics[width=0.33\textwidth]{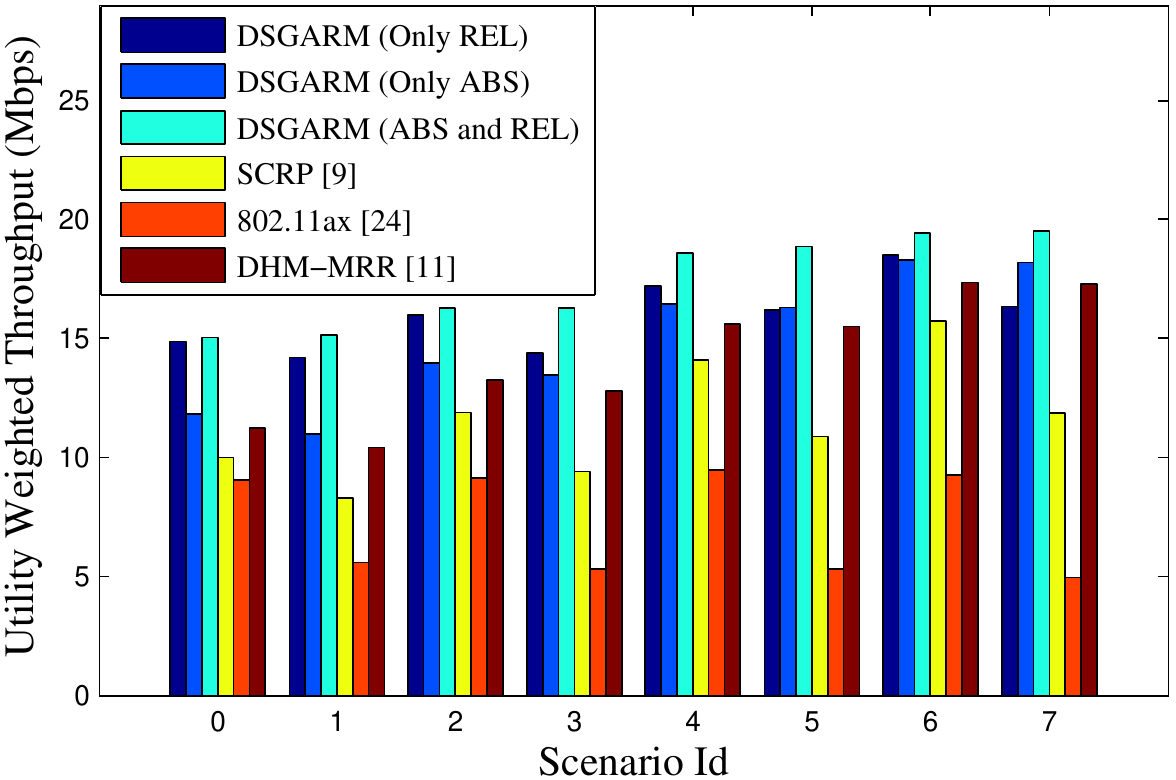}\label{figChanTra: sub_figure2}}\\
\caption{Performance comparison}
\label{fig Performance comparison of variable node number}
\end{figure*}


Extensive simulations suggest that the DSGARM protocol, employing the hybrid reservation mechanism, outperforms the pure absolute or relative reservation methods. Regarding the throughput of ordinary traffic, the hybrid protocol exhibits superior performance compared to the pure absolute reservation approach, owing to the enhancements in fragmentation and relative reservation. Additionally, the pure relative reservation enhances the transmission performance of ordinary traffic and also decreases the delay of real-time traffic compared to the 802.11ax protocol. However, due to the inability to ensure strict delay guarantees, the WADU gradually declines with an increase in the number of nodes. Consequently, the exclusive usage of either relative or absolute reservation cannot guarantee optimal performance for all types of traffic.


Compared to other reservation schemes, the proposed DSGARM protocol demonstrates superior performance by offering accurate QoS guarantee through the adaptive algorithm for the absolute reservation, effectively meeting the real-time traffic demands even under heavy loads, while employing relative reservation for ordinary traffic. The DHM-MRR can also guarantee the real-time traffic requirements. However, it exhibits inadequate performance for ordinary traffic due to the fragmentation issue associated with absolute reservation. While the CSRP protocol shows notable throughput improvements over the 802.11ax, it falls short in accurately guaranteeing the demands of real-time traffic and fails to ensure optimal performance for ordinary traffic, as the fragmentation problem reduces the throughput of ordinary traffic. Consequently, as the number of nodes increases, the overall throughput of this protocol is relatively lower.

\subsubsection{Performance Comparison of Varying Traffic}\label{Change scenario service type mechanism performance comparison}
\ 
\begin{table}[htbp]
    \centering
    \caption{Proportional configuration of traffic}
    \label{TraficPro}
    \begin{tabular}{ccccc}
        \toprule
        \textbf{Scenario Id} & \textbf{0-1} & \textbf{2-3} & \textbf{4-5}& \textbf{6-7} \\
        \midrule
        $E_0$ & 5\% & 10\% & 15\% & 20$\%$\\
        $E_1$ & 10\% & 15\% & 20\% & 25$\%$\\
        $E_2$ & 15\% & 11\% & 10\%  & 8$\%$\\
        $E_3$ & 15\% & 11\% & 10\%  & 8$\%$\\
        $E_4$ & 15\% & 13\% & 20\% & 8$\%$ \\
        $E_5$ & 40\% & 40\% & 25\%  & 8$\%$\\
        \bottomrule
    \end{tabular}
\end{table}

Table \ref{TraficPro} provides the proportion of each traffic type in each simulation scenario. Notably, scenarios with even IDs involve 60 configured nodes, while scenarios with odd IDs comprise 100 configured nodes. The simulation results are depicted in Fig. \ref{figChanTra: sub_figure1} and Fig. \ref{figChanTra: sub_figure2} respectively.

The simulation results suggest that, when the proportion of real-time traffic in the network is relatively low, both the DSGARM and DHM-MRR protocols effectively preserve the WADU of $1$. Furthermore, these two protocols outperform the comparison schemes in maintaining a utility of $1$ for more scenarios. In scenarios with the proportion of real-time traffic is high, the DSGARM protocol achieves a comparatively higher WADU by employing more efficient resource allocation strategies. The hybrid reservation mechanism also ensures optimal performance for ordinary traffic, leading to superior performance when compared to pure absolute reservation or relative reservation approaches. Overall, the DSGARM protocol exhibits the highest UWT across all considered scenarios.

\section{Implementation Issues}\label{IMPLEMENTATION ISSUES}
\begin{figure}
\subfloat[Impact of NCSC]{\includegraphics[width=0.25\textwidth]{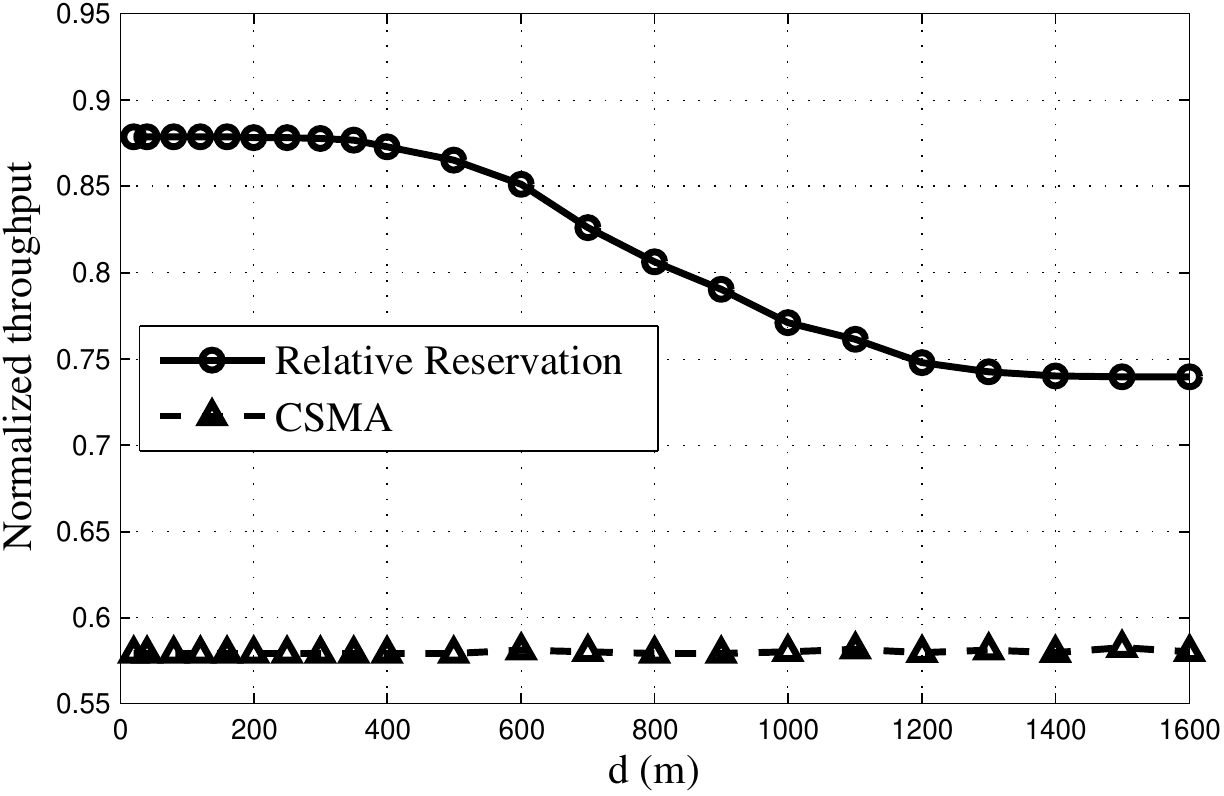}\label{figCHANGEd}}
\subfloat[Explanation of Convergence Time]{\includegraphics[width=0.25\textwidth]{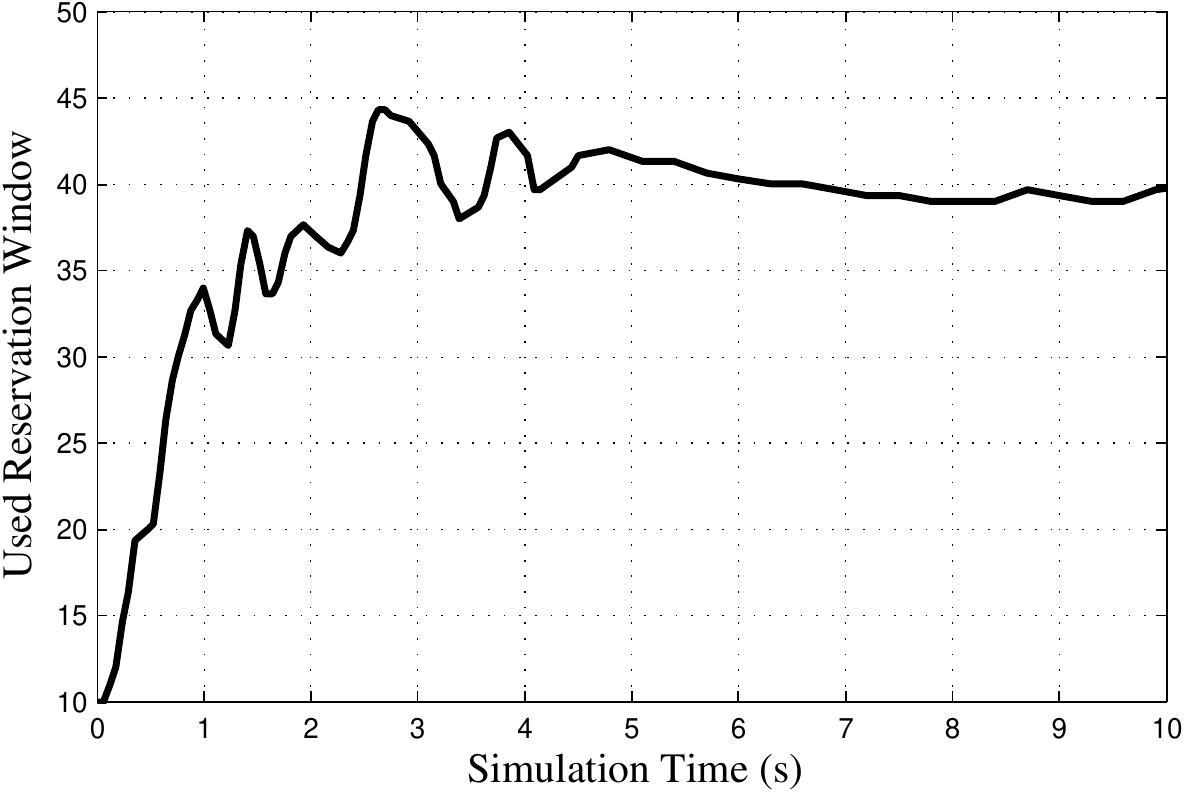}\label{figShouLian}}
\caption{Simulation result of implementation}
\label{fig:simulation result of implementation}
\end{figure}
\subsection{Traffic Prediction}\label{Traffic Torecasting}
The effectiveness of the channel reservation mechanism relies on precise traffic prediction or the traffic type characterized by regular arrivals. There is a potential risk of the lack of corresponding traffic upon the arrival of the reservation time. Researches \cite{Pro1}, \cite{Pro2} and \cite{wangluming} have already proposed accurate traffic prediction techniques. Additionally, the proposed soft reservation mechanism mitigates the overhead caused by this phenomenon. The aforementioned works can effectively address this issue. Therefore, the impact of burst traffic on reservations falls outside the scope of this paper.

\subsection{Mitigating the Impact of Algorithmic Computation}\label{Mitigating the Impact of Algorithmic Computation}

In order to mitigate the potential impact of algorithmic computation on real-time traffic transmission, the algorithm proposed in this paper adopts a periodic deferral computation during idle periods. Moreover, the distribution of reservation weight information across nodes and the sharing of computation results through reservation information further minimize the influence of the algorithmic execution process on the transmission process.


The proposed algorithm demonstrates rapid convergence within a relatively short time. In the simulation scenario involving $60$ nodes, as discussed in Section \ref{Performance comparison of variable node number}, the convergence of the reservation window size for observed node decisions is shown as Fig. \ref{figShouLian}. Specifically, stable parameters are achieved within $5$ seconds of the network being operational.



\subsection{Problem of Neighbor Channel Sensing Capability}\label{Problem of neighbor channel sensing capability}

\ 
\newline 
\indent Although this paper primarily concentrates on the densely deployed scenario, where nodes typically maintain consistent channel perceptions. There are instances where nodes may experience inconsistent channel perceptions with their neighbors due to long distances. This section investigates the performance of relative reservation in such cases. The channel model is been configured according to \cite{quqiaoXindao}. Specifically, a fixed observation node $A$ is employed, and $m$ nodes are randomly generated within the $[0,2d]$ around node $A$, following the uniform distribution.


The simulation results are shown in Fig. \ref{figCHANGEd}. In scenarios where nodes are situated nearby, the performance of the relative reservation mechanism remains well, as the channel perceptions between nodes are consistent. As the distance between nodes increases, variations in channel perceptions lead to reduced accuracy in relative reservation and a subsequent decrease in throughput performance. Nevertheless, even in cases where nodes are significantly distant from each other, the relative reservation mechanism still outperforms the CSMA protocol. It shows that for node pairs with substantial distance, although the relative reservation may not accurately prevent collisions, the erroneous relative reservation information can be viewed as an increase in the backoff window size, resulting in relatively minor additional overhead and even potentially reducing collisions. Conversely, for the close node pairs, the relative reservation effectively mitigates collisions. Therefore, the incorporation of relative reservation remains effective in scenarios where nodes are in close proximity to each other.

\subsection{The Accurate Submission of Reservation Information}

Owing to the possibility of a substantial complicated range, the reservation information forwarded via ACK may not be adequate to guarantee that all hidden terminals receive the reservation information. In \cite{yuanyunjie}, a method is proposed to address hidden terminals by employing multiple forwarding and relaying of reservation information, thereby achieving accurate submission of reservation information in complex interference scenarios. This method can be integrated with the protocol proposed in this paper.

\section{Conclusions}\label{Conclusions}
This paper introduces a hybrid channel reservation mechanism that provides differentiated traffic guarantees to ensure QoS for various traffic types. Through simulation results, the theoretical analysis is validated, highlighting the superiority of the proposed mechanism compared to existing reservation mechanisms.


Recently, the research group of the author has proposed the idea of particle-based access \cite{lizihua} and the network wave theory \cite{wangluobo}, both of which are important methods for ensuring traffic transmission quality and orderly access. The channel reservation mechanism is an important method for supporting and implementing the above ideas and theories. Therefore, the reservation protocol proposed in this paper will be combined with them in subsequent extensions.

\section{Acknowledgement}
This work was supported in part by the National Natural Science Foundations of CHINA (Grant No. 61871322, No. 61771390, and No. 61771392).

\bibliographystyle{IEEEtran}
\bibliography{New_IEEEtran_how-to}

\begin{thebibliography}{10}
\providecommand{\url}[1]{#1}
\csname url@samestyle\endcsname
\providecommand{\newblock}{\relax}
\providecommand{\bibinfo}[2]{#2}
\providecommand{\BIBentrySTDinterwordspacing}{\spaceskip=0pt\relax}
\providecommand{\BIBentryALTinterwordstretchfactor}{4}
\providecommand{\BIBentryALTinterwordspacing}{\spaceskip=\fontdimen2\font plus
\BIBentryALTinterwordstretchfactor\fontdimen3\font minus
  \fontdimen4\font\relax}
\providecommand{\BIBforeignlanguage}[2]{{%
\expandafter\ifx\csname l@#1\endcsname\relax
\typeout{** WARNING: IEEEtran.bst: No hyphenation pattern has been}%
\typeout{** loaded for the language `#1'. Using the pattern for}%
\typeout{** the default language instead.}%
\else
\language=\csname l@#1\endcsname
\fi
#2}}
\providecommand{\BIBdecl}{\relax}
\BIBdecl

\bibitem{barrachina2019komondor}
S.~Barrachina-Munoz, F.~Wilhelmi, I.~Selinis, and B.~Bellalta, ``Komondor: A
  wireless network simulator for next-generation high-density wlans,'' in
  \emph{2019 Wireless Days (WD)}.\hskip 1em plus 0.5em minus 0.4em\relax IEEE,
  2019, pp. 1--8.

\bibitem{intro2}
Q.~Huamei, F.~Linlin, Y.~Zhengyi, Y.~Weiwei, and W.~Jia, ``An energy-efficient
  mac protocol based on receiver initiation and multi-priority backoff for
  wireless sensor networks,'' \emph{IET Communications}, vol.~15, no.~20, pp.
  2503--2512, 2021.

\bibitem{intro3}
A.~Yaqoob, M.~A. Ashraf, F.~Ferooz, A.~H. Butt, and Y.~D. Khan, ``Wsn operating
  systems for internet of things (iot): A survey,'' in \emph{2019 International
  Conference on Innovative Computing (ICIC)}.\hskip 1em plus 0.5em minus
  0.4em\relax IEEE, 2019, pp. 1--7.

\bibitem{intro4}
K.~Obelovska, O.~Panova, and V.~Karovi{\v{c}}, ``Performance {Analysis of
  Wireless Local Area Network} for a {High}-/{Low}-{Priority} {Traffic Ratio}
  at {Different Numbers of Access Categories},'' \emph{Symmetry}, vol.~13,
  no.~4, p. 693, 2021.

\bibitem{2018Wireless}
R.~M. Bhavadharini, S.~Karthik, N.~Karthikeyan, and P.~Anand, ``Wireless
  networking performance in iot using adaptive contention window,''
  \emph{Wireless Communications and Mobile Computing}, vol. 2018, pp. 1--9,
  2018.

\bibitem{intro8}
B.~Li, W.~Li, F.~Valois, S.~Ubeda, H.~Zhou, and Y.~Chen, ``Performance analysis
  of an efficient mac protocol with multiple-step distributed in-band channel
  reservation,'' \emph{IEEE Transactions on Vehicular Technology}, vol.~59,
  no.~1, pp. 368--382, 2010.

\bibitem{intro9}
Z.~Liu, B.~Li, M.~Yang, and Z.~Yan, ``An adaptive channel reservation mac
  protocol based on forwarding traffic of key nodes,'' 2023, arXiv:2304.14581.

\bibitem{juedui1}
\BIBentryALTinterwordspacing
M.~Murali and R.~Srinivasan, ``Bandwidth reservation in mobile ad hoc network
  using real time mac protocol,'' in \emph{Proceedings of the 2009
  International Conference on Future Computer and Communication}, ser. ICFCC
  '09.\hskip 1em plus 0.5em minus 0.4em\relax USA: IEEE Computer Society, 2009,
  p. 563–566. [Online]. Available:
  \url{https://doi.org/10.1109/ICFCC.2009.114}
\BIBentrySTDinterwordspacing

\bibitem{juedui2}
J.~Cheng, B.~Li, M.~Yang, and Z.~Yan, ``Soft channel reservation towards
  latency guarantee for the next generation wlan: Ieee 802.11 be,'' in
  \emph{IoT as a Service}.\hskip 1em plus 0.5em minus 0.4em\relax Springer
  International Publishing, 2021, pp. 450--461.

\bibitem{juedui3}
A.~U. Khan, G.~Abbas, Z.~H. Abbas, T.~Baker, and M.~Waqas, ``Spectrum
  efficiency in crns using hybrid dynamic channel reservation and enhanced
  dynamic spectrum access,'' \emph{Ad Hoc Networks}, vol. 107, p. 102246, 2020.

\bibitem{juedui4}
X.~Yu, P.~Navaratnam, K.~Moessner, and H.~S. Cruickshank, ``Distributed
  resource reservation in hybrid mac with admission control for wireless mesh
  networks,'' \emph{IEEE Transactions on Vehicular Technology}, vol.~64,
  no.~12, pp. 5891--5903, 2015.

\bibitem{juedui5}
E.~Khorov, A.~Lyakhov, A.~Ivanov, and I.~F. Akyildiz, ``Modeling of real-time
  multimedia streaming in wi-fi networks with periodic reservations,''
  \emph{IEEE Access}, vol.~8, pp. 55\,633--55\,653, 2020.

\bibitem{juedui6}
E.~Li, F.~He, Q.~Li, and H.~Xiong, ``Bandwidth allocation of stream-reservation
  traffic in tsn,'' \emph{IEEE Transactions on Network and Service Management},
  vol.~19, no.~1, pp. 741--755, 2022.

\bibitem{xiangdui1}
J.~D. Kim, D.~I. Laurenson, and J.~S. Thompson, ``Centralized random backoff
  for collision resolution in wi-fi networks,'' \emph{IEEE Transactions on
  Wireless Communications}, vol.~16, no.~9, pp. 5838--5852, 2017.

\bibitem{xiangdui2}
J.~Lei, J.~Tao, J.~Huang, and Y.~Xia, ``A differentiated reservation mac
  protocol for achieving fairness and efficiency in multi-rate ieee 802.11
  wlans,'' \emph{IEEE Access}, vol.~7, pp. 12\,133--12\,145, 2019.

\bibitem{xiangdui3}
L.~Sanabria-Russo and B.~Bellalta, ``Traffic differentiation in dense
  collision-free wlans using csma/eca,'' \emph{Ad Hoc Networks}, vol. 75--76,
  pp. 33--51, 2018.

\bibitem{uti1}
A.~Kumar, A.~Abdelhadi, and C.~Clancy, ``A delay-optimal packet scheduler for
  m2m uplink,'' in \emph{MILCOM 2016 - 2016 {IEEE} Military Communications
  Conference}, IEEE.\hskip 1em plus 0.5em minus 0.4em\relax IEEE, 2016, pp.
  295--300.

\bibitem{uti2}
L.~Xiuqin, Z.~Xiuxiu, L.~Julong, and Y.~Xi-liang, ``A switching fabric
  performance evaluation model based on double-indicator utility function,''
  \emph{2010 3rd International Conference on Advanced Computer Theory and
  Engineering(ICACTE)}, vol.~2, pp. V2--582--V2--586, 2010.

\bibitem{quqiaoXindao}
Q.~Qu, B.~Li, M.~Yang, Z.~Yan, X.~Zuo, and Y.~Zhang, ``The neighbor channel
  sensing capability for wireless networks,'' in \emph{2016 IEEE International
  Conference on Signal Processing, Communications and Computing
  (ICSPCC)}.\hskip 1em plus 0.5em minus 0.4em\relax IEEE, 2016, pp. 1--6.

\bibitem{Pro1}
V.~Rodoplu, M.~Nakip, R.~Qorbanian, and D.~T. Eliiyi, ``Multi-channel joint
  forecasting-scheduling for the internet of things,'' \emph{IEEE Access},
  vol.~8, pp. 217\,324--217\,354, 2020.

\bibitem{Pro2}
K.~Qi, T.~Liu, C.~Yang, S.~Suo, and Y.~Huang, ``Dual connectivity-aided
  proactive handover and resource reservation for mobile users,'' \emph{IEEE
  Access}, vol.~9, pp. 36\,100--36\,113, 2021.

\bibitem{Paperbianqi}
G.~Bianchi, ``Performance analysis of the {IEEE} 802.11 distributed
  coordination function,'' \emph{IEEE Journal on selected areas in
  communications}, vol.~18, no.~3, pp. 535--547, 2000.

\bibitem{simu1}
Q.~Qu, B.~Li, M.~Yang, Z.~Yan, A.~Yang, D.-J. Deng, and K.-C. Chen, ``Survey
  and performance evaluation of the upcoming next generation wlans standard -
  {IEEE} 802.11ax,'' \emph{Mob. Netw. Appl.}, vol.~24, no.~5, pp. 1461--1474,
  oct 2019.

\bibitem{simu4}
IEEE, ``{IEEE} standard for information technology-telecommunications and
  information exchange between systems local and metropolitan area
  networks-specific requirements part 11: Wireless lan medium access control
  (mac) and physical layer (phy) specifications amendment 1: Enhancements for
  highefficiency wlan,'' \emph{IEEE Std 802.11ax2021 (Amendment to IEEE Std
  802.112020)}, p. 1767, 2021.

\bibitem{wangluming}
L.~Wang, M.~Yang, B.~Li, and Z.~Yan, ``A method of combining traffic
  classification and traffic prediction based on machine learning in wireless
  networks,'' 2023, arXiv:2304.01590.

\bibitem{yuanyunjie}
Y.~Yuan, B.~Li, Y.~Chen, and H.~Zhou, ``Ccrm: A mac protocol with cooperative
  channel reservation for wireless ad hoc networks,'' in \emph{2011 7th
  International Conference on Wireless Communications, Networking and Mobile
  Computing}.\hskip 1em plus 0.5em minus 0.4em\relax IEEE, 2011, pp. 1--4.

\bibitem{lizihua}
B.~Li, K.~Sun, Z.~Yan, and M.~Yang, ``Idea and theory of particle access,''
  2022, arXiv:2203.15191.

\bibitem{wangluobo}
B.~Li, M.~Yang, and Z.~Yan, ``Theory of network wave,'' 2022, arXiv:2203.05241.

\end{thebibliography}

\end{document}